\newcommand\numberthis{\addtocounter{equation}{1}\tag{\theequation}}
\newtheoremstyle{exampstyle}
{3pt} 
{0pt} 
{} 
{} 
{\bfseries} 
{.} 
{.5em} 
{} 
\theoremstyle{exampstyle} 
\newtheorem{theorem}{Theorem}
\theoremstyle{exampstyle} 
\newtheorem{lemma}{Lemma}
\theoremstyle{exampstyle} 
\theoremstyle{exampstyle} 
\newtheorem{problem}{Problem}
\theoremstyle{exampstyle}
\DeclareMathOperator*{\argmax}{arg\,max}
\begin{document}

\title{Multihop Routing for Data Delivery in V2X Networks}

\author{Yilin~Li,~\IEEEmembership{Student~Member,~IEEE,}
	Jian~Luo,
	Richard A. Stirling-Gallacher,~\IEEEmembership{Member,~IEEE,}
	Zhongfeng~Li,
	and~Giuseppe~Caire,~\IEEEmembership{Fellow,~IEEE}
	\thanks{Yilin Li is with the German Research Center, Huawei Technologies Duesseldorf GmbH, 80992 Munich, Germany, and with the Communications and Information Theory Group, Technische Universit{\"a}t Berlin, 10587 Berlin, Germany (e-mail: halodiplomat@gmail.com).}%
	\thanks{Jian Luo, Richard A. Stirling-Gallacher, and Zhongfeng Li are with the German Research Center, Huawei Technologies Duesseldorf GmbH, 80992 Munich, Germany (e-mail: jianluo@huawei.com; richard.sg@huawei.com; lizhongfeng@huawei.com).}%
	\thanks{Giuseppe Caire is with the Communications and Information Theory Group, Technische Universit{\"a}t Berlin, 10587 Berlin, Germany, and with the Department of Electrical Engineering, The University of Southern California, Los Angeles, CA 90089, USA (e-mail: caire@tu-berlin.de).}%
}

\maketitle

\begin{abstract}
Data delivery relying on the carry-and-forward strategy of vehicle-to-vehicle (V2V) communications is of significant importance, however highly challenging due to frequent connection disruption. Fortunately, incorporating vehicle-to-infrastructure (V2I) communications, motivated by its availability in bridging long-range vehicular connectivity, dramatically improves delivery opportunity. Nevertheless, the cooperation of V2V and V2I communications, known as vehicular-to-everything (V2X) communications, necessitates a specific design of multihop routing for enhancing data delivery performance. To address this issue, this paper provides a mathematical framework to investigate the data delivery performance in V2X networks in terms of both delivery latency and data rate. With theoretical analysis, we formulate a global and a distributed optimization problem to maximize the weighted sum of delivery latency and data rate. The optimization problems are then solved by convex optimization theory and based on the solutions, we propose a global and a distributed multihop routing algorithm to select the optimal route for maximizing the weighted sum. The rigorousness of the proposed algorithms is validated by extensive simulation under a wide range of system parameters and simulation results shed insight on the design of multihop routing algorithm in V2X networks for minimizing latency and maximizing data rate.
\end{abstract}

%
%
\IEEEpeerreviewmaketitle

\section{Introduction}\label{Ch4_Sec1}
Intelligent transportation system (ITS) revolutionizes the provisioning of diverse applications associated with driving safety, traffic management, and infotainment~\cite{Karagiannis}. These applications are beyond the far-fetched goals of academic, industry, and standardization groups, which aim to streamline the innovative operation of vehicle, facilitate safe and eco-friendly driving, and offer ubiquitous infotainment services for commuting passengers~\cite{Zeadally}. 

Typically, potential connectivity disruption as a result of high vehicle speed, time-varying vehicle density, and limited inter-vehicle contact time, confines vehicle-to-vehicle (V2V) communications to applications and services with short communication range~\cite{Kenney}. Fortunately, vehicle-to-infrastructure (V2I) communications represent a viable solution to bridge long-range vehicular connectivity by introducing stationary network entities, e.\,g., road side unit (RSU), to exchange data with vehicles~\cite{3GPPV2X}. By leveraging the complementary features of infrastructure-less V2V communications and infrastructure-assisted V2I communications, the use of hybrid vehicular communication, namely vehicle-to-everything (V2X), has been envisioned as a full-fledged solution to capture the connectivity, efficiency, and scalability of vehicular networks~\cite{Abdrabou}. 

Fifth generation (5G) mobile communications target to support efficient data delivery with bulky data rate, high reliability, and low latency for different V2X services and applications~\cite{YLiCommag}. Specifically, real-time applications, such as collision avoidance and lane-change/merge notification, bring the requirement of low latency. In addition, non-emergency services, such as video streaming, demand high data rate to fulfill capacity burst. Therefore, an efficient data delivery scheme to meet the diverse requirements of V2X communications is expected to be designed properly taking into account the coexistence of low latency and high data rate~\cite{Reis}.

Nevertheless, data delivery in V2X networks is particularly challenging, as its performance depends highly on the efficiency of data routing~\cite{JLiu}. On the one hand, pure inter-vehicle data delivery may introduce non-negligible latency because of frequent disconnection. On the other hand, the limited coverage of each RSU is a major concern of pure inter-RSU data delivery. Therefore, the design of multihop routing algorithm that minimizes latency as well as maximizes data rate becomes an interesting and challenging topic.

\subsection{Related Works}\label{Ch4_Sec1_RW}
Multihop routing for data delivery, particularly for vehicular networks, has been intensively investigated by recent research efforts~\cite{FZhang, YZhu, LYao, Choi, GSun, Ni, Alsharif, YWu, MWang1, JChen, JChen1, MWang, Atallah, JHeTMC1, MXingTMC, MXingTVT, JHeTMC2}. The knowledge of vehicular trajectories plays a key role for optimal data delivery, where the performance of routing algorithm relies heavily on the accuracy of vehicle mobility prediction~\cite{FZhang, YZhu, LYao, Choi, GSun}. 

It has been envisioned that the assistance of infrastructure facilitates data delivery by latency improvement~\cite{Ni, Alsharif, YWu, MWang1}. However, these works either assumed that the latency of V2V and V2I transmission can be ignored, or limited the latency to be considered in a single hop between vehicles, which is not proper for data delivery in large-scale vehicular networks where multihop transmissions are expected. More importantly, models applied to the above works depend on the prerequisite that the size of packets transmitted on V2V and V2I link is small enough, such that the data rate for delivering these packets is omitted. This assumption does not hold for the data delivery of services that rely on abundant data rate to guarantee the enormous requirement of data volume.

When it comes to achieving reasonable data rate for data delivery in V2X networks, a rich body of earlier studies have tackled the problem of how ``mobility improves data rate'' in vehicular networks~\cite{JChen, JChen1, MWang, Atallah}. With the exception of some studies that contributed to a limited investigation of latency performance~\cite{JChen}, none of the above-mentioned works to date has considered the trade-off between low latency and high data rate of data delivery in V2X networks, which is supposed to be a key enabler in fully exploit the mobility of vehicles complemented by the stability of infrastructures to improve data rate performance while keep latency tolerable.

Store-carry-and-forward strategy, where data is stored at intermediate nodes along delivery and forwarded at a later time to another intermediate station or the final destination, has been recently recognized as a promising evolution path to improve data delivery efficiency, either in latency or data rate~\cite{JHeTMC1, MXingTMC, MXingTVT, JHeTMC2}. Nevertheless, except the dropbox functionality that allows data to be stored with some cost, none of the aforementioned studies has considered a realistic model of RSU for providing V2I communications in terms of link establishment and resource allocation. Moreover, the ability of V2X networks to support connections for other devices besides vehicular users (e.g.\ cellular users), which is one of the key considerations in optimizing the overall system performance in this paper, has not been addressed in any of these works.  

\subsection{Contributions}\label{Ch4_Sec2_Ctrbt}
Different from the previous studies~\cite{JHeTMC1, MXingTMC}, where RSU is simply assumed to be a dropbox for data collection and temporary storage, in this paper we consider RSU as a network entity that supports both V2I and cellular communications and serves both vehicular and cellular users. 
The main contributions of this paper are summarized as follows:
\begin{itemize}
	\item Development of an accurate analytical framework for data delivery in V2X networks: The framework first considers \textit{hop-wise} latency and data rate, which are derived taking into account divergent vehicle mobility patterns and data forwarding behaviors at each hop. Based on these, the expected \textit{end-to-end} (E2E) latency and data rate are obtained by adding the hop-wise latency and minimizing the hop-wise data rate, respectively. 
	\item Formulation of optimization problems that maximize the weighted sum of latency and data rate: Unlike some of the previous works that revolved around the feasibility study of data delivery without explicitly addressing performance optimization (\cite{YZhu, Alsharif, MWang1}), we obtain mathematical expressions of both hop-wise and E2E latency/data rate, based on rigorous derivations, and formulate optimization problems that maximize the weighted sum of latency and data rate considering both global and distributed scenarios, where the weighted sum is optimized in E2E manner and hop-wise manner, respectively. 
	\item Leveraging the optimization problems and the corresponding solutions to propose multihop routing algorithms: The derived expressions of latency and data rate are transformed into closed-form for verifying the convexity of the proposed optimization problems, which are then solved by convex optimization theory. Based on these, multihop routing algorithms to select the optimal route are proposed for both global and distributed data delivery in terms of the weighted sum maximization.
	\item A detailed system-level performance evaluation for data delivery: 
	Extensive simulations are conducted under numerous system parameters to demonstrate the efficiency of the proposed algorithms in achieving lower latency and higher data rate compared to classical vehicular routing algorithms. The impact of broadcast scheme, vehicle arrival rate, and backhaul availability on the delivery performance are also analyzed.
\end{itemize}

The remainder of this paper is organized as follows: Section~\ref{Ch4_Sec2} presents the system model and Section~\ref{Ch4_Sec3} addresses the optimization problem formulation. In Section~\ref{Ch4_Sec4}, we solve the formulated problems and propose the corresponding routing algorithms. The proposed algorithms are then evaluated by extensive simulations in Section~\ref{Ch4_Sec5}, followed by a summary concluding this paper in Section~\ref{Ch4_Sec6}.

A conference version of this paper has appeared in~\cite{YLiGC}. The current paper extends the previous work with the development of distributed data delivery optimization, the design of distributed routing algorithm, and the support of wireless backhaul in data delivery. The current paper also includes all derivations, discussions of extensions, and more detailed simulations.

\section{System Model}\label{Ch4_Sec2}
In this section, we introduce the network, traffic, data forwarding, and radio models considered for finding the best route and optimizing the data delivery. 

\subsection{Network Model}\label{Ch4_Sec2_SubSec1}
We envision a scalable V2X network for data delivery incorporating both RSU-assisted and carry-and-forward strategies. The network is geographically and equally divided by the coverage of RSUs. Data is generated by a source node $S$ and to be carried and forwarded by vehicles to a destination node $D$. In general, $S$ and $D$ can be either RSU or vehicle. Without loss of generality, we assume that both $S$ and $D$ are RSUs. The traffic information of all vehicles in the networks are available at RSUs, however not all RSUs are necessarily interconnected\footnote{The data delivery problem with fully interconnected RSUs can be solved by routing algorithms for fixed network topology, which have been thoroughly studied and are trivial for V2X networks. Nevertheless, in this paper we also consider partially and/or fully interconnected RSUs that enable data forwarding via wireless backhaul between RSUs in Section~\ref{Ch4_Sec5}.}. We further assume $n$ routes, denoted as a set $\bm{\Gamma} = \{\gamma_i|i=1,\dotsc,n\}$, between $S$ and $D$. A route $\gamma_i$ could be composed of multiple hops and each hop refers to a road segment within the coverage of corresponding RSU.

\subsection{Traffic Model}\label{Ch4_Sec2_SubSec2}
Similar to the previous works~\cite{JChen, Reis, JHeTMC1, MXingTMC}, we assume that vehicles move at the same speed and stay within each hop for a constant duration $T$\footnote{This assumption holds well for highway or rural area, where RSUs are equidistantly deployed and vehicles on each lane move at the same speed with slight deviation. For urban scenario where diverse coverages and speed limits are expected, the networks can be partitioned into blocks and in each block, vehicles are likely to move at the similar speed due to speed limit along the isometric road segment between RSUs, hence the assumption is also valid.}. We further adopt the widely used traffic model where the
number of vehicles arrives at a hop and heads to the next hop is Poisson distributed~\cite{JChen, Reis}.

\subsection{Data Forwarding Model}\label{Ch4_Sec2_SubSec3}
Consider one route out of the route set $\bm{\Gamma}$. As there might not exist a single vehicle that moves along all hops of the considered route, the data would need to be forwarded by the vehicle that carries the data, when it no longer heads to $D$, to another vehicle that does. We referred to the vehicle that carries data and moves along a hop of the route as the \textit{courier} of the hop. If the courier moves towards the next hop of the route when leaving the current hop (the information about whether moving towards the next hop or not can be acquired by e.g.\ navigation or pre-configured route of autonomous vehicle), the data is carried by the courier to the next hop and consequently the courier of the next hop is still this vehicle. Otherwise, when arriving at the current hop, the courier tries to discover another vehicle that moves towards the next hop, referred to as the \textit{candidate} of the current hop, to forward the data. The discovery maintains at most a duration $t$ (the time for discovery process cannot exceed $t$), referred to as the \textit{global candidate discovery duration}, and once succeeds, the courier forwards the data to the candidate via V2V communications within the remaining time it stays in the hop. Otherwise, it sends the data to the RSU that covers the hop via V2I communications within $T-t$, then the RSU will find a suitable candidate heading to the next hop and forward the data. 

\subsection{Radio Model}\label{Ch4_Sec2_SubSec4}
For candidate discovery, we further assume that when arriving at a hop, the courier starts to broadcast beacon for the candidate discovery assisted by RSU. Information encapsulated in the beacon could include, e.g., the direction of the next hop. Once the beacon has been received and successfully decoded, with error probability $\epsilon$, the candidate sends feedback to report the successful reception. The feedback is then decoded by the courier, again with error probability $\epsilon$, and in case of incorrect decoding or no feedback detected, the courier repeats the discovery trial until a communication link between the courier and the candidate is established within $t$. 

\section{Problem Formulation}\label{Ch4_Sec3}
Without loss of generality, in the following sections we focus on a single route, which is referred to as the \textit{typical} route and randomly picked from the route set $\bm{\Gamma}$, between $S$ and $D$. We further denote the number of hops of the typical route as $k$ and the $h$-th hop of the typical route as hop $h$, respectively. Analysis of other routes can be derived similarly. 

\subsection{End-to-End Latency}\label{Ch4_Sec3_SubSec1}
As mentioned in Section~\ref{Ch4_Sec2_SubSec2}, the arrival of vehicles at a hop follows a Poisson distribution. Here, we denote the arrival rate of vehicles that arrive at hop $h$ and head to hop $h+1$ as $\lambda_{h,h+1}$. Based on the traffic model and data forwarding model described in Section~\ref{Ch4_Sec2_SubSec2} and Section~\ref{Ch4_Sec2_SubSec3}, respectively, the E2E latency of the typical route is described in the following scenarios. Examples including courier moves to the next stop, successful discovery, and failed discovery are illustrated in hop 1, 2, and 3 of Fig.~\ref{Ch4_Fig_disc}, respectively.
\begin{figure}[tbp]
	\centering%
	\includegraphics [width=\columnwidth]{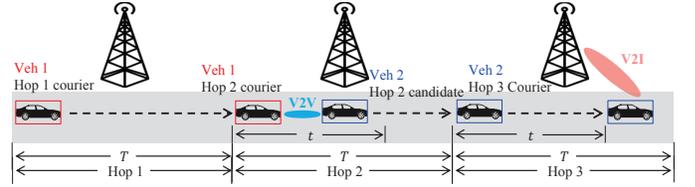}%
	\caption{Illustration of a route with 3 hops that correspond to courier moves to the next hop, successful discovery, and failed discovery, respectively.}\label{Ch4_Fig_disc}%
\end{figure}

\subsubsection{Courier Moves Towards the Next Hop}\label{Ch4_Sec3_SubSec1_SubSubSec1}~\\
In this scenario, the courier carries the data to the next hop (hop $h+1$), as depicted in hop 1 of Fig.~\ref{Ch4_Fig_disc}. Hence, there is no need for candidate discovery. Denoting the number of way-out directions except U-turn for hop $h$ as $\text{Deg}_h$. Then, the probability of the event ``Courier of hop $h$ moves towards hop $h+1$'', denoted as $P(\text{Courier:}\, h \rightarrow h+1)$, satisfies
\setlength{\abovedisplayskip}{3pt}
\setlength{\belowdisplayskip}{3pt}
\begin{equation} 
P(\text{Courier:}\, h \rightarrow h+1) = \frac{1}{\text{Deg}_h},
\label{eqnPcourier}
\end{equation}
which means when leaving hop $h$, the courier randomly selects a direction with equal probability. Clearly, the hop-wise latency of hop $h$ for the event ``Courier of hop $h$ moves towards hop $h+1$'', denoted as $L_{ \text{Courier:}\, h \rightarrow h+1 }$, is the duration of the courier staying in the hop, namely 
\begin{equation} 
L_{ \text{Courier:}\, h \rightarrow h+1 } = T.
\label{eqnDcourier}
\end{equation}

\subsubsection{Courier Succeeds in Candidate Discovery}\label{Ch4_Sec3_SubSec1_SubSubSec2}~\\
In case the courier is not heading to the next hop, a candidate discovery for data forwarding is carried out by the courier within the discovery duration $t$, as illustrated in hop 2 of Fig.~\ref{Ch4_Fig_disc}. When the number of arrivals in a given time interval follows Poisson distribution, inter-arrival times are known to have the exponential distribution~\cite{Kleinrock}. Thus, the probability of the event ``A candidate heading to hop $h+1$ arrives at hop $h$ within $t$ on condition that courier of hop $h$ does not move to hop $h+1$'', denoted as $P(\tau_{h,h+1} \leq t)$, satisfies
\begin{align*} 
P(\tau_{h,h+1} \leq t) 
&=  \int_0^t \! \lambda_{h,h+1}e^{-\lambda_{h,h+1}\tau_{h,h+1}} \, \mathrm{d}\tau_{h,h+1} \\
&\qquad \cdot \big( 1-P(\text{Courier:}\, h \rightarrow h+1) \big) \\
&= \big( 1-e^{-\lambda_{h,h+1}t} \big) \bigg( 1-\frac{1}{\text{Deg}_h} \bigg), \numberthis
\label{eqnPCandidate}
\end{align*}
where $\tau_{h,h+1}$ represents the time between the courier arrives at hop $h$ and a candidate heading to $h+1$ arrives at hop $h$.

When arriving at hop $h$, the courier starts to broadcast beacon to discover a candidate for data forwarding, as described in Section~\ref{Ch4_Sec2_SubSec4}. We denote the time of one discovery trial as $\Delta t$, which includes beacon broadcasting and feedback receiving time. Then, within the candidate discovery duration $t$, there would be maximally $m=\bigl\lfloor\frac{t}{\Delta t}\bigr\rfloor$ rounds of discovery trials, where $\lfloor . \rfloor$ represents the floor function. 
Then, the probability of the event ``Courier of hop $h$ successfully discovers a candidate'', denoted as $P(\text{Success})$, satisfies
\begin{align*} 
P(\text{Success}) &= P(\tau_{h,h+1} \leq t) \cdot \sum_{i=0}^{m-1} \big(1-(1-\epsilon)^2\big)^{i}{(1-\epsilon)^2} \\
&= \bigg( 1-\frac{1}{\text{Deg}_h} \bigg) \big( 1-e^{-\lambda_{h,h+1}t} \big) \\
&\qquad \cdot \Big(1-\big(1-(1-\epsilon)^2\big)^m\Big). \numberthis
\label{eqnPfind}
\end{align*}

Similarly to the previous scenario, the hop-wise latency of hop $h$ for the event ``Courier of hop $h$ successfully discovers a candidate'', denoted as $L_{\text{Success}}$, satisfies
\begin{equation} 
L_{\text{Success}} = T.
\label{eqnDfind}
\end{equation}

\subsubsection{Courier Fails in Candidate Discovery}\label{Ch4_Sec3_SubSec1_SubSubSec3}~\\
In this scenario, the courier has to send the data to RSU as no proper candidate can be discovered within $t$, as drawn in hop 3 of Fig.~\ref{Ch4_Fig_disc}. Accordingly, the probability of the event ``No candidate towards hop $h+1$ arrives at hop $h$ within $t$ on condition that courier of hop $h$ does not move to hop $h+1$'', denoted as $P(\tau_{h,h+1} > t)$, satisfies
\begin{align*} 
P(\tau_{h,h+1} > t) 
&= e^{-\lambda_{h,h+1}t} \bigg( 1-\frac{1}{\text{Deg}_h} \bigg). \numberthis
\label{eqnPnCandidate}
\end{align*}
Then, the probability of the event ``Courier of hop $h$ fails to discover a candidate'', denoted as $P(\text{Failure})$, can be derived as
\begin{align*} 
P(\text{Failure}) &= P(\tau_{h,h+1} \leq t) \big(1-(1-\epsilon)^2\big)^m + P(\tau_{h,h+1} > t) \\
&= \bigg(1-\frac{1}{\text{Deg}_h}\bigg) \big(1-e^{-\lambda_{h,h+1}t}\big) \big(1-(1-\epsilon)^2\big)^m\\
& \qquad + \bigg( 1-\frac{1}{\text{Deg}_h} \bigg) e^{-\lambda_{h,h+1}t}. \numberthis
\label{eqnPnfind}
\end{align*}

As the data is not forwarded to any candidate within $t$, the courier will transmit the data to RSU in the remaining time $T-t$. Specifically, the courier has to stay in the hop for $T$ anyway to move through the hop, which means that except the time for candidate discovery $t$, the remaining time for the courier to transmit the data to RSU is $T-t$. Afterwards, the RSU will find an appropriate candidate to forward the data and once found, the candidate receives the data from the RSU when moving along the hop within $T$. Therefore, the summarized hop-wise latency of hop $h$ for the event ``Courier of hop $h$ fails to discover a candidate'', denoted as $L_{\text{Failure}}$, is calculated as 
\begin{equation} 
L_{\text{Failure}} = T + \tau^{(d,\,\text{RSU})}_{h,h+1} + T = 2T + \tau^{(d,\,\text{RSU})}_{h,h+1}.
\label{eqnDnfind}
\end{equation}
Here, $\tau^{(d,\,\text{RSU})}_{h,h+1}$ represents the time for RSU to find a candidate. Note that the first $T$ does not overlap with $\tau^{(d,\,\text{RSU})}_{h,h+1}$, as in the remaining time after the courier has failed in candidate discovery, i.e.\ $T-t$, the RSU is receiving the data from the courier and not able to start to find a candidate.

Combining these scenarios, the expected hop-wise latency for data delivery, denoted as $E(L_{h,h+1})$, can be derived as
\begin{align*}
E(L_{h,h+1}) &= P(\text{Courier:}\, h \rightarrow h+1) E(L_{ \text{Courier:}\, h \rightarrow h+1 })\\
&\quad + P(\text{Success}) E(L_{\text{Success}}) \\
&\quad + P(\text{Failure}) E(L_{\text{Failure}}). \numberthis
\label{eqnDhop}
\end{align*}
In particular, $\tau^{(d,\,\text{RSU})}_{h,h+1}$ is also exponentially distributed due to Poisson distribution of vehicle arrivals, and correspondingly we have
\begin{align*}
E(L_{\text{failure}}) 
& = 2T+\frac{1}{\lambda_{h,h+1}}. \numberthis
\label{eqnDnfindmean}
\end{align*}

Finally, the expected E2E latency of the typical route, denoted as $\bar{L}$, is provided as
\begin{equation} 
\bar{L} = \sum_{h=1}^{k} E(L_{h,h+1}).
\label{eqnDroute}
\end{equation}

\subsection{End-to-End Data Rate}\label{Ch4_Sec3_SubSec2}
As mentioned in Section~\ref{Ch4_Sec1}, we consider a realistic model of RSU in terms of providing communication links to both vehicular and cellular users, which is different from the dropbox functionality considered in~\cite{JHeTMC1, MXingTMC, MXingTVT, JHeTMC2} that only allows data to be stored with some cost.
Similar to Section~\ref{Ch4_Sec3_SubSec1}, three scenarios are addressed here for the analysis of the expected data rate of the typical route.

\subsubsection{Courier Moves Towards the Next Hop}\label{Ch4_Sec3_SubSec2_SubSubSec1}~\\
In this scenario, RSU is not requested by courier for assisting candidate discovery. Therefore, the RSU exclusively serves cellular users. Denoting the achievable data rate of the services provided for cellular users as $r_\text{O}$, the achievable hop-wise data rate of hop $h$ for the event ``Courier of hop $h$ moves towards hop $h+1$'', denoted as $C_{ \text{Courier:}\, h \rightarrow h+1 }$, is calculated as
\begin{equation}
C_{ \text{Courier:}\, h \rightarrow h+1 } = r_\text{O}. 
\label{eqnUcourier}
\end{equation}
Note that $r_\text{O}$ indicates the overall achievable data rate and can be obtained by e.g.\ taking average of data rates among all cellular users in the network.


\subsubsection{Courier Succeeds in Candidate Discovery}\label{Ch4_Sec3_SubSec2_SubSubSec2}~\\
When a communication link has been established between courier and candidate, the carried data is transmitted from the courier to the candidate via V2V communications with data rate $r_\text{V2V}$, which can be similarly obtained as $r_\text{O}$ by e.g.\ taking average of data rates among all  V2V communication links. Denoting the actual candidate discovery time for the courier of hop $h$ as $\tau_{h,h+1}^{(d,\,\text{Veh})}$, the achievable hop-wise data rate of hop $h$ for the event ``Courier of hop $h$ successfully discovers a candidate'', denoted as $C_{\text{Success}}$, can be written as
\begin{equation} 
C_{\text{Success}} = \frac{r_\text{V2V}\Big(T-\tau_{h,h+1}^{(d,\,\text{Veh})}\Big)}{T} + \frac{r_\text{O}(T-t)}{T}. \label{eqnUfind}
\end{equation}
Here, the data rate consists of two parts. The first term of the right hand side of~\eqref{eqnUfind} indicates that the V2V communication link maintains a duration of $T-\tau_{h,h+1}^{(d,\,\text{Veh})}$, and the amount of data can be transmitted is calculated as $r_\text{V2V}(T-\tau_{h,h+1}^{(d,\,\text{Veh})})$. As the RSU of hop $h$ is not aware of the actual candidate discovery time $\tau_{h,h+1}^{(d,\,\text{Veh})}$, it preserves $t$ for the candidate discovery and correspondingly, the data rate provided by the RSU for cellular users is calculated as $r_\text{O}(T-t)$.

\subsubsection{Courier Fails in Candidate Discovery}\label{Ch4_Sec3_SubSec2_SubSubSec3}~\\
In this scenario, the data is first transmitted to RSU after failed candidate discovery within $t$ and then forwarded to a proper candidate. The RSU of hop $h$ is correspondingly exclusively associated with the courier for candidate discovery and data reception, until the data is successfully forwarded to a candidate. Therefore, the amount of data transmitted from the courier to the RSU via V2I communications is $r_\text{V2I}(T-t)$, where $r_\text{V2I}$ indicates the data rate of V2I communications and can be obtained similarly to $r_\text{V2V}$ and $r_\text{O}$. After finding a candidate within $\tau^{(d,\,\text{RSU})}_{h,h+1}$ and forwarding the received data to the candidate within $T-t$ (the data rate between the candidate and the RSU is also assumed to be $r_\text{V2I}$), the RSU is able to serve cellular users with amount of data $r_\text{O}t$. In summary, the achievable hop-wise data rate of hop $h$ for the event ``Courier of hop $h$ fails to discover a candidate'', denoted as $C_{\text{Failure}}$, can be written as
\begin{equation} 
C_{\text{Failure}} = \frac{ r_\text{V2I}(T-t) }{2T+\tau^{(d,\,\text{RSU})}_{h,h+1}} + \frac{ r_\text{O}t }{2T+\tau^{(d,\,\text{RSU})}_{h,h+1}}. \label{eqnUnfind}
\end{equation}

Combining these scenarios, the expected hop-wise data rate for data delivery, denoted as $E(C_{h,h+1})$, can be derived as
\begin{align}
E(C_{h,h+1}) &= P(\text{Courier:}\, h \rightarrow h+1) E(C_{ \text{Courier:}\, h \rightarrow h+1 })\notag \\
&\quad + P(\text{Success}) E(C_{\text{Success}})\notag \\
&\quad + P(\text{Failure}) E(C_{\text{Failure}}).
\label{eqnUhop}
\end{align}

For a multihop route between $S$ and $D$, the achievable data rate is determined by the ``weakest'' hop in which the lowest rate is achieved. Therefore, the expected E2E data rate of the typical route, denoted as $\bar{C}$, is calculated as
\begin{equation} 
\bar{C} = \min_{\forall h} E(C_{h,h+1}).
\label{eqnUroute}
\end{equation}

\subsection{Global Data Delivery Problem}\label{Ch4_Sec3_SubSec3}
Based on the analysis in Section~\ref{Ch4_Sec3_SubSec1} and Section~\ref{Ch4_Sec3_SubSec2}, it is clear that the global candidate discovery duration $t$ plays a vital role in determining the E2E latency and data rate. On the one hand, a larger $t$ allows courier a better chance to find a candidate, however leaves less time for data forwarding, which leads to data rate degrade. On the other hand, a decreased $t$ brings more time for data forwarding and correspondingly enhances the achievable data rate, while an increased latency is expected due to less opportunity for successful candidate discovery. Therefore, a trade-off between latency and data rate to optimize the overall performance can be achieved by the adaptation of $t$. The global data delivery optimization problem is formulated as follows:
\begin{problem}
	(Global weighted sum maximization)
	\begin{equation}
	\max_{\substack{t \in [0,T]}} \alpha \bar{C} - (1-\alpha) \bar{L}.
	\label{eqnOpt}
	\end{equation}\label{Ch4_prb1}
\end{problem}
Here, $\alpha \in [0,1]$ is the weight parameter. Distinguished by various use cases and application services, $\alpha$ can be flexibly adjusted, e.\,g., $\alpha = 1$ may refer to latency-tolerant but rate-sensitive use case, while $\alpha = 0$ may indicate real-time services which are keen to latency with relative low demand on data rate. Here, the term ``global'' indicates the universal configuration of the candidate discovery duration, where an identical candidate discovery duration $t$ is applied to all hops in the typical route. Note that here $\bar{C}$ and $\bar{L}$ are normalized in $[0,1]$, where the motivation of the normalization lies in the fact that data ranges and units of latency and data rate are not directly comparable~\cite{Marler}.

\subsection{Distributed Data Delivery Problem}\label{Ch4_Sec3_SubSec4}
In addition to the global data delivery problem, we further propose the distributed data delivery problem, where the weighted sum of latency and data rate is hop-wisely maximized, compared to the E2E-wise maximization addressed in Section~\ref{Ch4_Sec3_SubSec3}. The motivation of the distributed data delivery comes from the fact that the arrival rates $\lambda_{h,h+1}$, the actual candidate discovery time for RSU $\tau_{h,h+1}^{(d, \text{RSU})}$, and the actual candidate discovery time for vehicle $\tau_{h,h+1}^{(d, \text{Veh})}$, are diverse in each hop. Therefore, hop-tailored maximization may benefit from a hop-specific configuration of discovery duration, as the hop-individual weighted sum could bring potential gain in increasing data rate while reducing latency.

By replacing $t$ in~\eqref{eqnPCandidate},~\eqref{eqnPfind},~\eqref{eqnPnfind},~\eqref{eqnUfind}, and~\eqref{eqnUnfind} with $\hat{t}_h$, which we refer to as the \textit{hop-wise candidate discovery duration} of hop $h$, the expected hop-wise latency and data rate, denoted as $\hat{L}_{h}$ and $\hat{C}_{h}$, respectively, satisfy
\begin{align*}
\hat{L}_{h} &= P(\text{Courier:}\, h \rightarrow h+1) T + \hat{P}(\text{Success}) T \\
&\quad + \hat{P}(\text{Failure}) \bigg(2T+\frac{1}{\lambda_{h,h+1}}\bigg), \numberthis
\label{eqnDhopDstr}
\end{align*}
and
\begin{align*}
\hat{C}_{h} &= P(\text{Courier:}\, h \rightarrow h+1)r_\text{O} + \hat{P}(\text{Success}) E(\hat{C}_{\text{Success},h}) \\
&\quad + \hat{P}(\text{Failure}) E(\hat{C}_{\text{Failure},h}). \numberthis
\label{eqnUhopDstr}
\end{align*}
where the terms with hat in~\eqref{eqnDhopDstr} and~\eqref{eqnUhopDstr} indicates corresponding expressions in~\eqref{eqnPCandidate}--\eqref{eqnUnfind} by replacing $t$ with $\hat{t}_h$.

Finally, the distributed data delivery optimization problem is formulated as follows:
\begin{problem}
	(Distributed weighted sum maximization)
	\begin{equation}
	\max_{\substack{\hat{t}_h  \in [0,T]}} \alpha \hat{C}_{h} - (1-\alpha) \hat{L}_{h}.
	\label{eqnOptDstr}
	\end{equation}\label{Ch4_prb2}
\end{problem}
Here, $\alpha \in [0,1]$ is the weight parameter. Similar to the global data delivery problem, a trade-off between latency and data rate to optimize the overall performance can be achieved by the adaptation of $\hat{t}_h$.

\section{Data Delivery Optimization and Routing Algorithm Design}\label{Ch4_Sec4}
In this section, we propose solutions of the data delivery optimization problems formulated in Section~\ref{Ch4_Sec3}. 

\subsection{Reformation of Problem Formulation}\label{Ch4_Sec4_SubSec1}
The optimization problem~\ref{Ch4_prb1} and~\ref{Ch4_prb2} are formulated in a sophisticated way in Section~\ref{Ch4_Sec3}, and it is relatively hard to verify the convexity of the optimization problems determined in~\eqref{eqnOpt} and~\eqref{eqnOptDstr}. In this subsection, the derived latency and data rate are reformed into closed-form expressions.

\subsubsection{Reformation of the End-to-End Latency}\label{Ch4_Sec4_SubSec1_SubSubSec1}~\\
For simplicity, let $\alpha_h =1/\text{Deg}_h$, $\beta_h(t) = e^{-\lambda_{h,h+1}t}$, $\theta_h(t) = \big(1-(1-\epsilon)^2\big)^m$, and $\phi_h = T+1/\lambda_{h,h+1}$. Then, we transform the expected hop-wise latency derived in~\eqref{eqnDhop} as
\begin{align*}
E(L_{h,h+1}) 
&= T + (1-\alpha_h)\phi_h\big(\beta_h(t)+\theta_h(t)-\beta_h(t)\theta_h(t)\big) . \numberthis
\label{eqnDhopNew}
\end{align*}
Based on this, the expected E2E latency, which is depicted in~\eqref{eqnDroute}, is now calculated as
\begin{align*}
\bar{L} 
&= kT+ \sum_{h=1}^{k} (1-\alpha_h)\phi_h\big(\beta_h(t)+\theta_h(t) -\beta_h(t)\theta_h(t)\big). \numberthis
\label{eqnDrouteNew}
\end{align*}

\subsubsection{Reformation of the End-to-End Data Rate}\label{Ch4_Sec4_SubSec1_SubSubSec2}~\\
For simplicity, let $\zeta_h = (1-\alpha_h)r_\text{O}$, $\iota_h = \frac{ r_\text{V2V}(T-1/\lambda_{h,h+1}) }{T}+r_\text{O}$, $\kappa_h = \frac{r_\text{V2I}}{2T+1/\lambda_{h,h+1}}T$, $\nu_h(t) = -\frac{r_\text{O}}{T}t$, $\chi_h(t) = \frac{r_\text{O}-r_\text{V2I}}{2T+1/\lambda_{h,h+1}}t$, and $z(t)=\beta_h(t)+\theta_h(t)-\beta_h(t)\theta_h(t)$. Then, similar to the reformation of the expected hop-wise latency, the expected hop-wise data rate, which is derived in~\eqref{eqnUhop}, can be transformed as
\begin{align*}
E(C_{h,h+1}) 
&= \zeta_h + (1-\alpha_h)\big(1-z(t)\big)  \big(\iota_h + \nu_h(t) \big)\\
& \quad + (1-\alpha_h)z(t) \big( \kappa_h + \chi_h(t) \big). \numberthis
\label{eqnUhopNew}
\end{align*}

As the expression of $E(C_{h,h+1})$ in~\eqref{eqnUhopNew} is a combination of multiplication and summation of hop-dependent terms (terms with the subscript $h$), finding the closed-form of the minimum of $E(C_{h,h+1})$, namely $\bar{C}$, is still a very complicated problem. Therefore, we further reformed $\bar{C}$ as
\begin{align*}
\bar{C} &= P(\text{All success})E(C_{\text{All success}})\\
& \quad + P(\text{All failure})E(C_{\text{All failure}})\\
& \quad + P(\text{Mixture})E(C_{\text{Mixture}}), \numberthis
\label{eqnUroute1}
\end{align*}
where
\begin{align*}
P(\text{All success}) &= \textstyle\prod_{h=1}^{k} P(\text{Success}), \numberthis
\label{eqnPallfind}
\end{align*}
\begin{align*}
P(\text{All failure}) &= \textstyle\prod_{h=1}^{k} P(\text{Failure}), \numberthis
\label{eqnPnonefind}
\end{align*}
and
\begin{align*}
P(\text{Mixture}) &= 1 - P(\text{All success}) - P(\text{All failure}). \numberthis \label{eqnPsomefind}
\end{align*}
Similarly, $C(\text{All success})$, $C(\text{All failure})$, and $C(\text{Mixture})$ represent the expected data rate of the corresponding events and are provided by 
\begin{align*}
C_{\text{All success}} &= \min_{\forall h} C_{\text{Success},h}, \numberthis
\label{eqnUallfind}
\end{align*}
\begin{align*}
C_{\text{All failure}} &= \min_{\forall h} C_{\text{Failure},h}, \numberthis
\label{eqnUnonefind}
\end{align*}
and
\begin{align*}
C_{\text{Mixture}} &= \min_{\substack{\forall h,l, h \neq l}} (r_\text{O}, C_{\text{Success},h}, C_{\text{Failure},l}). \numberthis \label{eqnUsomefind}
\end{align*}
Now, it is clear that the main task of finding the closed-form of the expected E2E data rate is to reform the minimization operators addressed in $C(\text{All success})$, $C(\text{All failure})$, and $C(\text{Mixture})$.

a) Closed-form of $E(C_{\text{All success}})$.
\begin{lemma}
	Given a set of i.i.d geometrically distributed random variables $\{\textrm{X}_i | i=1,\dotsc,n\}$ with parameter $p \in (0,1)$, the probability mass function (PMF) of $\textrm{Y}=\max(\textrm{X}_1,\ldots,\textrm{X}_n)$, denoted as $f_{\textrm{Y}}{(x)}$, satisfies
	\begin{equation} 
	f_{\textrm{Y}}{(x)} = \big(1 - (1-p)^x \big) ^n - \big(1 - (1-p)^{x-1} \big) ^n. \label{eqnPDFmaxG}
	\end{equation}\label{Ch4_Lem1}
\end{lemma}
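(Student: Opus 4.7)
The plan is to derive the PMF of $\textrm{Y}$ through its CDF, exploiting the fact that the CDF of a maximum of independent random variables factorizes into a product.

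First, I would recall the CDF of a single geometrically distributed random variable. For $\textrm{X}_i$ with parameter $p$, the standard calculation (summing the geometric series $\sum_{j=1}^{x}(1-p)^{j-1}p$) gives
\begin{equation*}
P(\textrm{X}_i \leq x) = 1 - (1-p)^x
\end{equation*}
for every positive integer $x$. This step is routine; I would state it without grinding through the summation.

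Next, I would use the i.i.d. assumption to write the CDF of $\textrm{Y}$. Since $\textrm{Y} \leq x$ if and only if every $\textrm{X}_i \leq x$, independence yields
\begin{equation*}
P(\textrm{Y} \leq x) = \prod_{i=1}^{n} P(\textrm{X}_i \leq x) = \bigl(1-(1-p)^x\bigr)^n.
\end{equation*}
Finally, since $\textrm{Y}$ is integer-valued, the PMF is obtained from the CDF via a one-step difference,
\begin{equation*}
f_{\textrm{Y}}(x) = P(\textrm{Y}=x) = P(\textrm{Y} \leq x) - P(\textrm{Y} \leq x-1),
\end{equation*}
which upon substitution directly yields the claimed expression~\eqref{eqnPDFmaxG}.

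There is no real obstacle here; the lemma is essentially a textbook calculation and the only subtlety worth mentioning is making sure the CDF formula $P(\textrm{X}_i\leq x) = 1 - (1-p)^x$ is consistent with the convention for the geometric distribution used in the paper (support starting at $1$, with $P(\textrm{X}_i=k) = (1-p)^{k-1}p$), so that the boundary case $x=1$ gives $f_{\textrm{Y}}(1) = p^n$ as expected. I would briefly remark on this convention to keep the statement unambiguous before invoking the lemma later in the reformation of $E(C_{\text{All success}})$.
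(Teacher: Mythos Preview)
Your proposal is correct and follows essentially the same route as the paper's own proof: compute the CDF of $\textrm{Y}$ by factorizing over the i.i.d.\ $\textrm{X}_i$'s, then take a one-step difference to recover the PMF. The only addition on your side is the explicit remark about the geometric-distribution convention and the boundary check at $x=1$, which the paper omits.
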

\begin{proof}
	The CDF of the maximum $\textit{\textrm{Y}}$ can be calculated as: 
	\begin{align*}
	P(\textit{\textrm{Y}} \leq x) &= \displaystyle\prod_{i=1}^{n} P(\textit{\textrm{X}}_i \leq x) = P(\textit{\textrm{X}} \leq x)^n \\
	&= \big(1 - (1-p)^x \big) ^n. \numberthis
	\label{eqnCDFmaxG}
	\end{align*}
	Hence, the PMF of $\textit{\textrm{Y}}$ can be derived as
	\begin{align*}
	f_{\textrm{Y}}{(x)} &= P(\textit{\textrm{Y}} \leq x) - P\big(\textit{\textrm{Y}} \leq (x-1)\big)\\
	&= \big(1 - (1-p)^x \big) ^n - \big(1 - (1-p)^{x-1} \big) ^n . \numberthis
	\label{eqnPDFmaxGderi}
	\end{align*}
\end{proof}
Considering the scenario of successful candidate discovery in all hops, the E2E data rate is limited by the ``weakest'' hop, at which the longest time for candidate discovery is consumed. Consequently, the goal of finding $\min_{\forall h} C_{\text{Success},h}$ turns to find $\max_{\forall h} \tau_{h,h+1}^{(d,\,\text{Veh})}$, which is solved by the following theorem.
\begin{theorem}
	Let $m_h\in \{1, \dotsc, m\}, \forall h \in \{1,\dotsc,k\}$ and $\xi = \max_{\forall h} m_h$ denote the actual number of discovery trials in hop $h$ and the maximum of $m_h$, respectively. Let $p=(1-\epsilon)^2$. Then, the closed-form of $E(C_{\text{All success}})$ satisfies
	\begin{equation} 
	E(C_{\text{All success}}) = \frac{r_\textnormal{V2V} \bigg(T - E\Big(\max_{\forall h} \tau_{h,h+1}^{(d,\,\textnormal{Veh})} \Big) \bigg) + r_\text{O}(T-t) }{T}, \label{eqnUallfindtheo}
	\end{equation}
	where
	\begin{align*}
	E\Big(\max_{\forall h} \tau_{h,h+1}^{(d,\,\textnormal{Veh})} \Big) &= \sum_{\xi=1}^{m} \xi \Big( \big(1 - (1-p)^{\xi} \big) ^k\\
	&\qquad \quad - \big(1 - (1-p)^{{\xi}-1} \big) ^k \Big) \cdot \Delta t. \numberthis
	\label{eqnUallfindmtheo}
	\end{align*}\label{Ch4_Thm1}
\end{theorem}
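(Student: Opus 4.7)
The plan has two main pieces: first reduce the hop-wise minimum of $C_{\text{Success},h}$ to the hop-wise maximum of $\tau_{h,h+1}^{(d,\textnormal{Veh})}$, then compute the expectation of that maximum via Lemma~\ref{Ch4_Lem1}.

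First, I would exploit the structure of~\eqref{eqnUfind}. The per-hop success rate $C_{\text{Success},h}$ splits into a hop-dependent V2V term $r_\text{V2V}(T-\tau_{h,h+1}^{(d,\textnormal{Veh})})/T$ and a hop-independent cellular term $r_\text{O}(T-t)/T$. Because $r_\text{V2V}>0$, the map $\tau\mapsto r_\text{V2V}(T-\tau)/T$ is strictly decreasing, and since the cellular term is common to every hop under the global discovery duration $t$, the minimum of $C_{\text{Success},h}$ is attained at exactly the hop with the largest discovery time. Hence
\begin{equation*}
\min_{\forall h}C_{\text{Success},h}=\frac{r_\text{V2V}\bigl(T-\max_{\forall h}\tau_{h,h+1}^{(d,\textnormal{Veh})}\bigr)}{T}+\frac{r_\text{O}(T-t)}{T},
\end{equation*}
and taking expectation on both sides (using linearity, since the cellular term is deterministic given $t$) produces~\eqref{eqnUallfindtheo}.

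Second, I would evaluate $E(\max_{\forall h}\tau_{h,h+1}^{(d,\textnormal{Veh})})$. By the radio model in Section~\ref{Ch4_Sec2_SubSec4}, each discovery trial consists of one beacon plus one feedback, each correctly decoded with probability $1-\epsilon$, so the per-trial success probability is $p=(1-\epsilon)^2$ and trials are i.i.d. Consequently the number of trials $m_h$ until first success is geometric with parameter $p$, and because trials in different hops occur at disjoint times on independent wireless channels the $\{m_h\}_{h=1}^k$ are i.i.d. Since $\tau_{h,h+1}^{(d,\textnormal{Veh})}=m_h\cdot\Delta t$, applying Lemma~\ref{Ch4_Lem1} with $n=k$ gives the PMF of $\xi=\max_{\forall h}m_h$ as $\bigl(1-(1-p)^\xi\bigr)^k-\bigl(1-(1-p)^{\xi-1}\bigr)^k$. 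Multiplying by $\Delta t$ and summing $\xi$ from $1$ to $m$ then yields exactly~\eqref{eqnUallfindmtheo}, completing the derivation.

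The main obstacle is the reduction step in paragraph one: one must argue cleanly that minimizing a \emph{function} of the $\tau_{h,h+1}^{(d,\textnormal{Veh})}$ can be pushed inside the expectation and converted into the maximization of the $\tau_{h,h+1}^{(d,\textnormal{Veh})}$ themselves. This relies on the affine-decreasing form of $C_{\text{Success},h}$ in $\tau_{h,h+1}^{(d,\textnormal{Veh})}$ and on the hop-independence of the cellular term; without both of these, the min/expectation interchange would not collapse to a single extreme-value expectation and Lemma~\ref{Ch4_Lem1} would not apply directly. A secondary technical point is that within the ``All success'' event each $m_h$ is bounded by $m=\lfloor t/\Delta t\rfloor$, which is already reflected in the upper limit of the sum in~\eqref{eqnUallfindmtheo}; apart from this truncation of the support, the geometric PMF from Lemma~\ref{Ch4_Lem1} carries over unchanged.
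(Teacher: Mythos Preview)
Your proposal is correct and follows essentially the same route as the paper: reduce $\min_{\forall h}C_{\text{Success},h}$ to $\max_{\forall h}\tau_{h,h+1}^{(d,\textnormal{Veh})}$ via the affine-decreasing structure of~\eqref{eqnUfind}, then write $\tau_{h,h+1}^{(d,\textnormal{Veh})}=m_h\Delta t$ and apply Lemma~\ref{Ch4_Lem1} with $n=k$ to obtain the PMF of $\xi$ and hence~\eqref{eqnUallfindmtheo}. If anything, you are more explicit than the paper about why the min collapses to a single max and about the i.i.d.\ and truncation assumptions, which the paper leaves implicit.
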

\begin{proof}
	$E(C_{\text{All success}})$ can be derived as
	\begin{align*}
	E(C_{\text{All success}}) 
	&= \frac{r_\text{V2V}\bigg(T - E\Big(\max_{\forall h} \tau_{h,h+1}^{(d,\,\text{Veh})} \Big) \bigg) + r_\text{O}(T-t) }{T}, \numberthis
	\label{eqnUallfindderi1}
	\end{align*}
	where $\tau_{h,h+1}^{(d,\,\text{Veh})} = m_h \Delta t$. As $\Delta t$ remains constant, finding $\max_{\forall h} \tau_{h,h+1}^{(d)}$ turns further to find $\xi = \max_{\forall h} m_h$.
	Given Lemma~\ref{Ch4_Lem1}, the PMF of $\xi$, denoted as $f{(\xi)}$, is calculated as
	\begin{equation}
	f{(\xi)} = \big(1 - (1-p)^{\xi} \big) ^k - \big(1 - (1-p)^{{\xi}-1} \big) ^k. 
	\label{eqnUallfindderi2}
	\end{equation}
	Then $\forall h \in \{1,\dotsc,k\}$, we have
	\begin{align*}
	E\Big(\max_{\forall h} \tau_{h,h+1}^{(d,\,\text{Veh})} \Big) &=  \sum_{\xi=1}^{m} \xi \Delta t \cdot f{(\xi)}\\
	 &= \sum_{\xi=1}^{m} \xi \Big( \big(1 - (1-p)^{\xi} \big) ^k\\
	 &\qquad \quad - \big(1 - (1-p)^{{\xi}-1} \big) ^k \Big) \cdot \Delta t. \numberthis
	\label{eqnUallfindderi3}
	\end{align*}
\end{proof}

b) Closed-form of $E(C_{\text{All failure}})$.~\\
The E2E data rate of this scenario can be solved similarly. Specifically, finding $\min_{\forall h} C_{\text{Failure},h}$ turns to find $\max_{\forall h} \tau^{(d,\,\text{RSU})}_{h,h+1}$ where $\tau^{(d,\,\text{RSU})}_{h,h+1}$ follows a exponentially distribution with parameter $\lambda_{h,h+1}$. 
\begin{lemma}
	Given a set of i.i.d exponentially distributed random variables $\{\textrm{X}_i | i=1,\dotsc,n\}$ with parameter $\lambda_i$, the probability density function (PDF) of $\textrm{Z}=\max(\textrm{X}_1,\dotsc,\textrm{X}_n)$, denoted as $f_{\textrm{Z}}{(x)}$, satisfies
	\begin{equation} 
	f_{\textrm{Z}}{(x)} = \sum_{i=1}^{n} \Bigg( \lambda_i e^{-\lambda_i x} \displaystyle\prod_{\substack{j=1, j \neq i}}^{n} \big( 1-e^{-\lambda_j x} \big) \Bigg). \label{eqnPDFmaxE}
	\end{equation}\label{Ch4_Lem2}
\end{lemma}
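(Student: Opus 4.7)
\textbf{Proof plan for Lemma~\ref{Ch4_Lem2}.} The plan is to mimic the two-step argument used in Lemma~\ref{Ch4_Lem1}: first derive the CDF of the maximum via independence, then differentiate to obtain the PDF. Since each $\textrm{X}_i$ is exponentially distributed with rate $\lambda_i$, we have $P(\textrm{X}_i \leq x) = 1 - e^{-\lambda_i x}$ for $x \geq 0$, and by independence of the $\textrm{X}_i$'s,
\begin{equation*}
F_{\textrm{Z}}(x) = P(\textrm{Z} \leq x) = \prod_{i=1}^{n} P(\textrm{X}_i \leq x) = \prod_{i=1}^{n} \bigl( 1 - e^{-\lambda_i x} \bigr).
\end{equation*}

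The PDF is then obtained by differentiating this product with respect to $x$. The key step is the product rule: differentiating $\prod_{i=1}^{n}(1-e^{-\lambda_i x})$ yields a sum of $n$ terms, where the $i$-th term replaces the $i$-th factor $(1 - e^{-\lambda_i x})$ by its derivative $\lambda_i e^{-\lambda_i x}$ while keeping all other factors intact. This immediately gives
\begin{equation*}
f_{\textrm{Z}}(x) = \frac{\mathrm{d}}{\mathrm{d}x} F_{\textrm{Z}}(x) = \sum_{i=1}^{n} \Bigg( \lambda_i e^{-\lambda_i x} \prod_{\substack{j=1,\, j \neq i}}^{n} \bigl( 1 - e^{-\lambda_j x} \bigr) \Bigg),
\end{equation*}
which matches~\eqref{eqnPDFmaxE}.

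There is no genuine obstacle here; the whole argument is a one-line CDF computation followed by a direct application of the product rule. The only subtlety worth flagging is a minor terminology issue in the statement: the variables are labeled ``i.i.d.'' yet carry distinct parameters $\lambda_i$, so strictly they are independent but not identically distributed. The proof only needs independence, so the derivation above goes through verbatim; if desired, the statement could be tightened by replacing ``i.i.d.'' with ``mutually independent''. This lemma will subsequently be applied with $\textrm{X}_i = \tau^{(d,\,\text{RSU})}_{i,i+1}$ to obtain the closed-form expression of $E(C_{\text{All failure}})$, paralleling how Lemma~\ref{Ch4_Lem1} was invoked in Theorem~\ref{Ch4_Thm1}.
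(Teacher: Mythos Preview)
Your proof is correct and follows exactly the paper's approach: compute the CDF of the maximum as the product $\prod_{i=1}^{n}(1-e^{-\lambda_i x})$ by independence, then differentiate via the product rule to obtain~\eqref{eqnPDFmaxE}. Your side remark about the ``i.i.d.'' wording is also apt---only independence is used, so the statement really concerns independent (not necessarily identically distributed) exponentials.
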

\begin{proof}
	The CDF of the maximum $\textit{\textrm{Z}}$ can be calculated as: 
	\begin{align*}
	p(\textit{\textrm{Z}} \leq x) &= \displaystyle\prod_{i=1}^{n} p(\textit{\textrm{X}}_i \leq x) = \displaystyle\prod_{i=1}^{n} \big(1-e^{-\lambda_i x} \big). \numberthis
	\label{eqnCDFmaxE}
	\end{align*}
	Hence, the PDF of $\textit{\textrm{Z}}$ can be derived as
	\begin{align*}
	f_{\textrm{Z}}{(x)} &= \frac{\mathrm{d}}{\mathrm{d}x} p(\textit{\textrm{Z}} \leq x)
	= \sum_{i=1}^{n} \Bigg( \lambda_i e^{-\lambda_i x} \displaystyle\prod_{\substack{j=1, j \neq i}}^{n} \big( 1-e^{-\lambda_j x} \big) \Bigg). \numberthis
	\label{eqnPDFmaxEderi}
	\end{align*}
\end{proof}
Given Lemma~\ref{Ch4_Lem2}, the closed-form of the expected E2E data rate $E(C_{\text{All failure}})$ is solved by the following theorem.
\begin{theorem}
	Let $\eta = \max_{\forall h} \tau^{(d,\,\textnormal{RSU})}_{h,h+1} \in [0,\infty)$ and $\mu_h=\lambda_{h,h+1}, \forall h \in \{1,\dotsc,k\}$, respectively. Then, the closed-form of $E(C_{\text{All failure}})$ satisfies
	\begin{align*}
	E(C_{\text{All failure}}) &= \frac{r_\textnormal{V2I}(T - t)+r_\text{O}t}{2T+E\Big(\max_{\forall h} \tau^{(d,\,\textnormal{RSU})}_{h,h+1}\Big)}, \numberthis
	\label{eqnUnonefindtheo}
	\end{align*}
	where
	\begin{align*}
	E(\max_{\forall h}\tau^{(d,\,\textnormal{RSU})}_{h,h+1}) &= \sum_{h=1}^{k} \mu_h \int_0^\infty \! \eta \Bigg( e^{-\mu_h \eta} \\
	&\qquad \qquad \qquad \cdot \displaystyle\prod_{\substack{l=1, l \neq h}}^{k} \big( 1-e^{-\mu_l \eta} \big) \Bigg) \mathrm{d}\eta. \numberthis
	\label{eqnUnonefindmtheo}
	\end{align*}\label{Ch4_Thm2}
\end{theorem}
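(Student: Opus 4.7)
The plan is to mirror the structure of Theorem~\ref{Ch4_Thm1}: first reduce the minimum-of-rates problem to a maximum-of-times problem, then invoke Lemma~\ref{Ch4_Lem2} to obtain the distribution of that maximum, and finally integrate to get its mean. Concretely, I would start from~\eqref{eqnUnfind} and write
\begin{equation*}
C_{\text{Failure},h} = \frac{r_\text{V2I}(T-t) + r_\text{O}t}{2T + \tau^{(d,\,\text{RSU})}_{h,h+1}},
\end{equation*}
observing that the numerator is the same for every hop, while the denominator is strictly increasing in $\tau^{(d,\,\text{RSU})}_{h,h+1}$. Hence $C_{\text{Failure},h}$ is monotonically decreasing in $\tau^{(d,\,\text{RSU})}_{h,h+1}$, so
\begin{equation*}
C_{\text{All failure}} = \min_{\forall h} C_{\text{Failure},h} = \frac{r_\text{V2I}(T-t) + r_\text{O}t}{2T + \max_{\forall h}\tau^{(d,\,\text{RSU})}_{h,h+1}}.
\end{equation*}
This converts the original min-over-$h$ into a max-over-$h$ on a single random quantity, matching the move made in Theorem~\ref{Ch4_Thm1}.

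Next I would take expectations. In line with the derivation of~\eqref{eqnUallfindtheo}, where the expectation is pushed inside the rational expression to yield a tractable closed form, I would write
\begin{equation*}
E(C_{\text{All failure}}) = \frac{r_\text{V2I}(T-t) + r_\text{O}t}{2T + E\Big(\max_{\forall h}\tau^{(d,\,\text{RSU})}_{h,h+1}\Big)},
\end{equation*}
which is the first displayed equation of the theorem. It then remains only to derive the marginal expectation of $\eta := \max_{\forall h}\tau^{(d,\,\text{RSU})}_{h,h+1}$. Since each $\tau^{(d,\,\text{RSU})}_{h,h+1}$ is exponentially distributed with rate $\mu_h = \lambda_{h,h+1}$, Lemma~\ref{Ch4_Lem2} applies directly with $n=k$ and yields the PDF
\begin{equation*}
f_\eta(\eta) = \sum_{h=1}^{k}\mu_h e^{-\mu_h \eta}\prod_{\substack{l=1, l\neq h}}^{k}\bigl(1-e^{-\mu_l \eta}\bigr).
\end{equation*}
Integrating $\eta f_\eta(\eta)$ from $0$ to $\infty$ and interchanging sum and integral (justified by non-negativity and Fubini/Tonelli) produces exactly~\eqref{eqnUnonefindmtheo}, completing the proof.

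The routine calculations (monotonicity of $1/(2T+x)$, direct substitution into Lemma~\ref{Ch4_Lem2}, and the sum-integral exchange) are all straightforward. The main obstacle I foresee is conceptual rather than computational: the step of pushing $E[\cdot]$ through the rational function $\frac{A}{2T+\eta}$ is not an identity, since $E[1/(2T+\eta)] \neq 1/(2T+E[\eta])$ in general. I would handle this the same way the paper implicitly does in Theorem~\ref{Ch4_Thm1}, namely by treating it as a deterministic-equivalent approximation consistent with the framework (and possibly noting that Jensen's inequality gives a one-sided bound, so the formula is a lower bound on the true expected rate). Apart from this modeling subtlety, the derivation is a direct application of Lemma~\ref{Ch4_Lem2} followed by a standard expectation integral.
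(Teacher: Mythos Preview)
Your proposal is correct and follows essentially the same route as the paper: reduce $\min_h C_{\text{Failure},h}$ to a function of $\max_h \tau^{(d,\text{RSU})}_{h,h+1}$, apply Lemma~\ref{Ch4_Lem2} to obtain the PDF of that maximum, and integrate. In fact you are more explicit than the paper, which simply asserts~\eqref{eqnUallonederi1} without spelling out the monotonicity argument; you also correctly flag the $E[1/(2T+\eta)]$ versus $1/(2T+E[\eta])$ issue, which the paper silently treats as an equality in both Theorems~\ref{Ch4_Thm1} and~\ref{Ch4_Thm2}.
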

\begin{proof}
	$E(C_{\text{All faliure}})$ can be derived as
	\begin{align*}
	E(C_{\text{All failure}}) 
	&= \frac{r_\text{V2I}(T - t)+r_\text{O}t}{2T+E\Big(\max_{\forall h}\tau^{(d,\,\text{RSU})}_{h,h+1}\Big)}. \numberthis
	\label{eqnUallonederi1}
	\end{align*}
	Given Lemma~\ref{Ch4_Lem2}, the PDF of $\eta$, denoted as $f(\eta), \eta \in [0,\infty), \forall h \in \{1,\dotsc,k\}$, can be written as
	\begin{equation}
	f{(\eta)} = \sum_{h=1}^{k} \Bigg( \mu_h e^{-\mu_h \eta} \displaystyle\prod_{\substack{l=1, l \neq h}}^{k} \big( 1-e^{-\mu_l \eta} \big) \Bigg). 
	\label{eqnUnonefindderi2}
	\end{equation}
	Then, we have
	\begin{align*}
	E\Big(\max_{\forall h} \tau^{(d,\,\text{RSU})}_{h,h+1} \Big) &= E(\eta ) = \int_0^\infty \! \eta f{(\eta)} \mathrm{d}\eta \\
	&= \sum_{h=1}^{k} \mu_h \int_0^\infty \! \eta \Bigg( e^{-\mu_h \eta} \\
	&\qquad \qquad \qquad \cdot \displaystyle\prod_{\substack{l=1, l \neq h}}^{k} \big( 1-e^{-\mu_l \eta} \big) \Bigg)\mathrm{d}\eta. \numberthis
	\label{eqnUnonefindderi3}
	\end{align*}
\end{proof}
c) Closed-form of $E(C_{\text{Mixture}})$.~\\
The derivation of the closed-form of the expected E2E data rate $E( C_{\text{Mixture}} )$ is a bit tricky as hop latency for this scenario follows a combination of geometric distribution (successful discovery) and exponential distribution (failed discovery). To solve the problem, we first introduce the following lemma:
\begin{lemma}
	For a random variable $\textrm{X}$ with non-negative values, the expectation of $\textrm{X}$, denoted as $E(\textrm{X})$, satisfies
	\begin{equation} 
	E(\textrm{X}) = \int_0^\infty \! \big( 1-F_\textrm{X}(x) \big) \mathrm{d}x, \label{eqnExp}
	\end{equation}
	where $F_\textrm{X}(x)$ indicates the CDF of $\textrm{X}$.\label{Ch4_Lem3}
\end{lemma}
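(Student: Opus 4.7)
The plan is to prove this classical tail-sum (or layer-cake) formula by writing $x$ itself as an integral of an indicator, substituting into the definition of expectation, and then swapping the order of integration by Tonelli's theorem. Since we are dealing with non-negative random variables and non-negative integrands throughout, the exchange of integration order is automatic and no integrability hypothesis is needed (both sides are well-defined in $[0,\infty]$ and agree even when $E(\textrm{X}) = \infty$).

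Concretely, I would begin from the definition $E(\textrm{X}) = \int_0^\infty x \, dF_\textrm{X}(x)$, and then rewrite the factor $x$ in the integrand using the identity
\begin{equation*}
x \;=\; \int_0^\infty \mathbf{1}_{\{y < x\}} \, \mathrm{d}y, \qquad x \geq 0.
\end{equation*}
Substituting this into the expectation gives a double integral, and applying Tonelli's theorem (permissible because the integrand is non-negative) swaps the order and yields
\begin{equation*}
E(\textrm{X}) \;=\; \int_0^\infty \!\!\int_0^\infty \mathbf{1}_{\{y < x\}} \, \mathrm{d}y \, dF_\textrm{X}(x) \;=\; \int_0^\infty \!\!\int_y^\infty dF_\textrm{X}(x) \, \mathrm{d}y.
\end{equation*}
The inner integral is exactly $P(\textrm{X} > y) = 1 - F_\textrm{X}(y)$, which delivers the claimed identity.

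An alternative route that I would mention as a sanity check is integration by parts: with $u = x$ and $dv = dF_\textrm{X}(x)$, picking the antiderivative $v = -(1-F_\textrm{X}(x))$ gives
\begin{equation*}
\int_0^\infty x \, dF_\textrm{X}(x) \;=\; \Big[-x\big(1-F_\textrm{X}(x)\big)\Big]_0^\infty + \int_0^\infty \big(1 - F_\textrm{X}(x)\big) \mathrm{d}x.
\end{equation*}
Provided $E(\textrm{X}) < \infty$, Markov's inequality forces $x(1-F_\textrm{X}(x)) \to 0$ as $x \to \infty$, so the boundary term vanishes and the result follows again.

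The only mild obstacle is bookkeeping around the justification for exchanging the order of integration, but this is trivially handled by Tonelli since the integrand $\mathbf{1}_{\{y<x\}}$ is non-negative; no dominated-convergence or finite-mean assumption is required for the Fubini-based argument. Once this identity is established, it will be applied in the sequel to express $E(C_{\text{Mixture}})$ in closed form by integrating the tail of the mixed geometric/exponential hop-latency distribution, which is precisely why the lemma is stated here.
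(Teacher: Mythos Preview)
Your primary argument is correct and is essentially the same Fubini/Tonelli swap that the paper uses: the paper starts from the right-hand side, writes $1-F_\textrm{X}(x)=\int_x^\infty f_\textrm{X}(y)\,\mathrm{d}y$, and interchanges the order of integration to recover $\int_0^\infty y f_\textrm{X}(y)\,\mathrm{d}y$. Your version is slightly cleaner in that you work with the Stieltjes measure $dF_\textrm{X}$ rather than assuming a density, and you explicitly justify the interchange via Tonelli, but the underlying idea is identical.
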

\begin{proof}
	Let $f_\textrm{X}(y)$ represents the PDF of $\textrm{X}$. Then,
	\begin{align*}
	\int_0^\infty \! \big( 1 - F_\textrm{X}(x) \big) \mathrm{d}x &= \int_0^\infty \! P(\textrm{X} \geq x) \mathrm{d}x\\
	&= \int_0^\infty \! \int_x^\infty \! f_\textrm{X}(y) \mathrm{d}y \mathrm{d}x\\
	&= \int_0^\infty \! x f_\textrm{X}(x) \mathrm{d}x = E(\textrm{X}). \numberthis
	\label{eqnExpderi2}
	\end{align*}
\end{proof}
Further, we notice that $E( C_{\text{Mixture}} )$ can be derived as
\begin{align*}
E( C_{\text{Mixture}} ) &= E\Big( \min_{\substack{\forall h,l, h \neq l}} (r_\text{O}, C_{\text{Success},h}, C_{\text{Failure},l}) \Big). \numberthis
\label{eqnUsomefindderi1}
\end{align*}
Let $\rho=\min_{\forall h,l, h \neq l} (C_{\text{Success},h}, C_{\text{Failure},l})$, then we have
\begin{align*}
E( C_{\text{Mixture}} ) &= P(r_\text{O}\leq \rho)r_\text{O} + P(r_\text{O}\geq \rho)E(\rho) \\
&= \big(1-P(\rho\leq r_\text{O})\big)r_\text{O} + P(\rho\leq r_\text{O})E(\rho) \\
&= \big( 1 - F_{\min_{\forall h} C_{\text{Success},h}}(r_\text{O}) F_{\min_{\forall l} C_{\text{Failure},l}}(r_\text{O}) \big) r_\text{O} \\
&\quad + F_{\min_{\forall h} C_{\text{Success},h}}(r_\text{O}) F_{\min_{\forall l} C_{\text{Failure},l}}(r_\text{O})E(\rho). \numberthis
\label{eqnUsomefindderi2}
\end{align*}
Here, $E(\rho)$ can be derived by applying Lemma~\ref{Ch4_Lem3} as
\begin{align*}
E(\rho) &= \int_0^\infty \! \big( 1-F_{\rho}(x) \big) \mathrm{d}x, \numberthis
\label{eqnUsomefindderi3}
\end{align*}
where
\begin{align*}
F_{\rho}(x) &= 1-\big(1-F_{\min_{\forall h} C_{\text{Success},h}}(x)\big) \big(1-F_{\min_{\forall l} C_{\text{Failure},l}}(x)\big) \\
&= F_{\min_{\forall h} C_{\text{Success},h})}(x) + F_{\min_{\forall l} C_{\text{Failure},l}}(x) \\
&\quad -F_{\min_{\forall h} C_{\text{Success},h}}(x) F_{\min_{\forall l} C_{\text{Failure},l}}(x). \numberthis
\label{eqnUsomefindderi4}
\end{align*}
In~\eqref{eqnUsomefindderi2} and~\eqref{eqnUsomefindderi4}, $F_{\min_{\forall h} C_{\text{Success},h}}(x)$ and $F_{\min_{\substack{\forall l}} C_{\text{Failure},l}}(x)$ can be calculated from $F_{\xi}(x)$ and $F_{\eta}(x)$, which can be obtained by taking summation of~\eqref{eqnUallfindderi2} and integral of~\eqref{eqnUnonefindderi2}, respectively. 

\subsection{Optimal Data Delivery}\label{Ch4_Sec4_SubSec2}
\subsubsection{Global Data Delivery Optimization}\label{Ch4_Sec4_SubSec2_SubSubSec1}~\\
From~\eqref{eqnDrouteNew},~\eqref{eqnUallfindtheo},~\eqref{eqnUallfindmtheo},~\eqref{eqnUnonefindtheo},~\eqref{eqnUnonefindmtheo},~\eqref{eqnUsomefindderi2},~\eqref{eqnUsomefindderi3}, and~\eqref{eqnUsomefindderi4}, it is evident that the global candidate discovery duration $t$ is the variable for adapting the E2E latency and data rate, given the typical route with $k$ hops, hop duration $T$, decode error rate $\epsilon$, vehicle arrival rate $\lambda_{h,h+1}$, and the data rates $r_\text{V2V}$, $r_\text{V2I}$, and $r_\text{O}$. It is shown in Appendix~\ref{Ch4_Sec7_SubSec6} that the global optimization problem is convex and differentiable, which can be solved by convex optimization theory.
\begin{theorem}
	Let $t_s$ be the stationary point of $\alpha \bar{C} - (1-\alpha) \bar{L}$, i.\,e., $\frac{\mathrm{d}(\alpha \bar{C} - (1-\alpha) \bar{L})}{\mathrm{d}t_s}=0$. Then, $t^*$ is the optimal solution for Problem~\ref{Ch4_prb1}, where
	\begin{equation}
	t^* = \argmax_{t=\{0,t_s,T\}} \alpha \bar{C} - (1-\alpha) \bar{L}, t_s \in [0,T].
	\label{eqnSolc1}
	\end{equation}\label{Ch4_Thm3}
\end{theorem}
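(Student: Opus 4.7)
The plan is to exploit the convexity and differentiability of the objective established in Appendix~\ref{Ch4_Sec7_SubSec6}, together with compactness of the feasible set $[0,T]$, to reduce the continuous maximization in Problem~\ref{Ch4_prb1} to a finite comparison of three candidate values. Let $J(t) := \alpha\bar{C}(t) - (1-\alpha)\bar{L}(t)$ denote the objective, using the closed-form expressions~\eqref{eqnDrouteNew},~\eqref{eqnUallfindtheo},~\eqref{eqnUnonefindtheo}, and~\eqref{eqnUsomefindderi2} derived earlier.

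First, I would cite Appendix~\ref{Ch4_Sec7_SubSec6} to conclude that $J$ is concave and continuously differentiable on the compact interval $[0,T]$, so by the extreme value theorem a global maximizer $t^{\dagger}\in[0,T]$ exists. Next, I would split into cases according to the location of $t^{\dagger}$. If $t^{\dagger}\in(0,T)$, then Fermat's rule applied to the differentiable function $J$ forces the first-order condition $\frac{\mathrm{d}J}{\mathrm{d}t}\big|_{t=t^{\dagger}}=0$, which is exactly the defining equation of $t_s$ in the statement; by concavity any such interior critical point is a global maximizer, so $t^{\dagger}=t_s$. If instead $t^{\dagger}\in\{0,T\}$, then either there is no stationary point in the interior (in which case $J$ is monotone on $[0,T]$ and the boundary wins trivially), or the stationary point $t_s$ exists but evaluates no higher than the boundary. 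In all cases $t^{\dagger}\in\{0,t_s,T\}\cap[0,T]$.

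Third, taking the argmax of $J$ over this finite candidate set recovers $t^{\dagger}$, which by~\eqref{eqnSolc1} coincides with $t^*$. This yields the claim that $t^*$ solves Problem~\ref{Ch4_prb1}. If $t_s\notin[0,T]$, the side condition in~\eqref{eqnSolc1} simply drops the inadmissible candidate from the comparison, leaving only the two boundary points, which is consistent with the monotone case above.

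The main obstacle is really the concavity claim that I am importing from the appendix. Once that is granted, the theorem itself is a one-line appeal to first-order optimality on a compact interval. The analytical work lives in verifying the sign of $J''(t)$, because $\bar{L}$ contains products $\beta_h(t)\theta_h(t)$ and $\bar{C}$ is assembled via the mixture decomposition~\eqref{eqnUroute1}--\eqref{eqnUsomefindderi4}; a secondary subtlety worth flagging is the discreteness introduced by $m=\lfloor t/\Delta t\rfloor$ inside $\theta_h(t)$, which technically makes $J$ only piecewise smooth in $t$. The cleanest way to handle this in writing is either to treat $m$ as fixed within each subinterval of length $\Delta t$, solve the resulting smooth concave subproblem by Theorem~\ref{Ch4_Thm3}, and then take the best value over the finitely many subintervals; or to confirm that the appendix's convexity argument already accommodates this step-function behaviour, in which case the three-point comparison in~\eqref{eqnSolc1} suffices as stated.
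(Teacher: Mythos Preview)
Your proposal is correct and reaches the same conclusion as the paper, but the route is slightly different in presentation. The paper casts Problem~\ref{Ch4_prb1} as a constrained program with affine inequalities $t\geq 0$, $t\leq T$, invokes Slater's condition plus the concavity from Appendix~\ref{Ch4_Sec7_SubSec6} to make KKT both necessary and sufficient, and then reads off from complementary slackness that an interior optimum must satisfy $\frac{\mathrm{d}J}{\mathrm{d}t}=0$ while otherwise the optimum sits at $0$ or $T$. You instead argue directly via the extreme value theorem and Fermat's interior rule, which for a one-dimensional concave problem on a compact interval is exactly what the KKT machinery collapses to. Your approach is more elementary and arguably better matched to the problem's dimensionality; the paper's KKT framing would generalize more readily to a multi-variable version but is heavier than needed here. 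One point in your favour: you explicitly flag the piecewise-smoothness caused by $m=\lfloor t/\Delta t\rfloor$ in $\theta_h(t)$, which the paper's proof does not address; your suggested fix (solve on each smooth subinterval and compare) is the honest way to handle it.
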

\begin{proof}
	See Appendix~\ref{Ch4_Sec7_SubSec7}.
\end{proof}

\subsubsection{Distributed Data Delivery Optimization}\label{Ch4_Sec4_SubSec2_SubSubSec2}~\\
Similar to the global data delivery problem, it is evident that the hop-wise candidate discovery duration $\hat{t}_h$ controls the hop-wise latency and data rate considering hop duration $T$, decode error rate $\epsilon$, vehicle arrival rate $\lambda_{h,h+1}$, and the data rates $r_\text{V2V}$, $r_\text{V2I}$, and $r_\text{O}$. Specifically, the expected hop latency and data rate, which are $\hat{L}_{h}$ and $\hat{C}_{h}$ addressed in~\eqref{eqnDhopDstr} and~\eqref{eqnUhopDstr}, respectively, can be derived as
\begin{align*}
\hat{L}_{h} &= T + (1-\alpha_h)\phi_h\big(\hat{\beta_h}(t)+\hat{\theta}(t)-\hat{\beta_h}(t)\hat{\theta}(t)\big), \numberthis
\label{eqnDhopDstrSol}
\end{align*}
and
\begin{align*}
\hat{C}_{h} 
&= \zeta_h + (1-\alpha_h)\big(1-\hat{z}(t)\big) \big(\iota_h + \hat{\nu_h}(t) \big)\\
& \quad + (1-\alpha_h)\hat{z}(t) \big( \kappa_h + \hat{\chi_h}(t) \big). \numberthis
\label{eqnUhopDstrSol}
\end{align*}
where $\hat{\beta_h(t)} = e^{-\lambda_{h,h+1}\hat{t}_h}$, $\theta_h(t) = \big(1-(1-\epsilon)^2\big)^{\bigl\lfloor\frac{\hat{t}_h}{\Delta t}\bigr\rfloor}$, $\hat{\nu_h(t)} = -\frac{r_\text{O}}{T}\hat{t}_h$, $\hat{\chi_h(t)} = \frac{r_\text{O}-r_\text{V2I}}{2T+\frac{1}{\lambda_{h,h+1}}}\hat{t}_h$, and $\hat{z}(t)=\hat{\beta_h}(t)+\hat{\theta}(t)-\hat{\beta_h}(t)\hat{\theta}(t)$.

According to~\cite{YLiTWC}, $\Delta t$, which is the time of one discovery trial, depends on beam duration, frame length, and error probability, which are independent of the duration $\hat{t}_h$. Therefore, the distributed data delivery performance in terms of the weighted sum of the hop-wise latency and the hop-wise data rate, is maximized by determining the optimal $\hat{t}_h$. The convexity of the distributed optimization problem can be verified similarly to the global optimization problem~\ref{Ch4_prb1} and is omitted here to avoid redundancy.

Ultimately, the optimization problem~\ref{Ch4_prb2} is solved by the following theorem:
\begin{theorem}
	Let $\hat{t}_{h,s}$ be the stationary point of $\alpha \hat{C}_{h} - (1-\alpha) \hat{L}_{h}$, i.\,e., $\frac{\mathrm{d}(\alpha \hat{C}_{h} - (1-\alpha) \hat{L}_{h})}{\mathrm{d}\hat{t}_{h,s}}=0$. Then $\hat{t}^*_h$ is the optimal solution for Problem~\ref{Ch4_prb2}, where
	\begin{equation}
	\hat{t}^*_h = \argmax_{\hat{t}_h=\{0,\hat{t}_{h,s},T\}} \alpha \hat{C}_{h} - (1-\alpha) \hat{L}_{h}, \hat{t}_{h,s} \in [0,T].
	\label{eqnSold1}
	\end{equation}\label{Ch4_Thm4}
\end{theorem}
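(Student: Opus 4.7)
The plan is to mirror the argument of Theorem~\ref{Ch4_Thm3}, but applied hop by hop. The key preliminary observation is that in Problem~\ref{Ch4_prb2} the objective $g_h(\hat{t}_h) := \alpha \hat{C}_{h} - (1-\alpha)\hat{L}_{h}$ depends only on the single scalar variable $\hat{t}_h$, so the distributed optimization decouples into $k$ independent scalar maximizations over the compact interval $[0,T]$. It therefore suffices to analyze one such scalar program and assemble the per-hop answers.

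First, I would substitute the reformulated expressions \eqref{eqnDhopDstrSol} and \eqref{eqnUhopDstrSol} into $g_h$ to obtain an explicit formula in $\hat{\beta}_h(\hat{t}_h) = e^{-\lambda_{h,h+1}\hat{t}_h}$, the piecewise-constant factor $\hat{\theta}(\hat{t}_h)$, and the affine maps $\hat{\nu}_h(\hat{t}_h)$ and $\hat{\chi}_h(\hat{t}_h)$. Continuity on $[0,T]$ is immediate, and differentiability holds wherever $\lfloor \hat{t}_h/\Delta t \rfloor$ inside $\hat{\theta}$ is locally constant. I would then replay the convexity computation used in Appendix~\ref{Ch4_Sec7_SubSec6} for the global problem with $k=1$ and with the summation over hops dropped; since the dependence of $\hat{L}_h$ and $\hat{C}_h$ on $\hat{t}_h$ has the same functional form as that of $\bar{L}$ and $\bar{C}$ on $t$ in the single-hop case, convexity and differentiability carry over verbatim, as remarked in the paragraph preceding the theorem.

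Having established this, the argument proper is short. By the extreme value theorem, the maximum of $g_h$ on the compact interval $[0,T]$ is attained; by the convexity and differentiability just inherited from Theorem~\ref{Ch4_Thm3}, any interior maximizer must satisfy Fermat's first-order condition $\frac{\mathrm{d}\,g_h}{\mathrm{d}\hat{t}_h}=0$, which is by definition $\hat{t}_{h,s}$; otherwise the maximum lies at a boundary point $\hat{t}_h\in\{0,T\}$. Taking the argmax of $g_h$ over the three candidates $\{0,\hat{t}_{h,s},T\}$ then gives \eqref{eqnSold1}.

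The main obstacle I anticipate is handling the floor function inside $\hat{\theta}$, which makes $g_h$ only piecewise smooth rather than smooth on all of $[0,T]$. I would address this by partitioning $[0,T]$ into the subintervals $\bigl[(j-1)\Delta t,\,j\Delta t\bigr)$, running the differentiable analysis on each piece separately, and then observing that monotonicity within each piece (as in the proof of Theorem~\ref{Ch4_Thm3}) collapses the additional piecewise boundary candidates back to the canonical triple $\{0,\hat{t}_{h,s},T\}$. A secondary subtlety is that convexity may degenerate for extreme values of $\alpha$ or of the rate parameters $r_\text{V2V},r_\text{V2I},r_\text{O},\lambda_{h,h+1}$; in that case one simply enumerates the (finitely many) stationary points produced by $\frac{\mathrm{d}\,g_h}{\mathrm{d}\hat{t}_h}=0$, which preserves the structural form of the optimum stated in \eqref{eqnSold1}.
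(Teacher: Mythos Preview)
Your proposal is correct and matches the paper's approach: the paper does not give a separate proof of Theorem~\ref{Ch4_Thm4} but simply states that the convexity verification and the optimization argument carry over from Theorem~\ref{Ch4_Thm3} to the single-hop setting, which is exactly what you do. The only cosmetic difference is that the paper's Theorem~\ref{Ch4_Thm3} argument is phrased via KKT conditions (Appendix~\ref{Ch4_Sec7_SubSec7}) rather than the extreme value theorem plus Fermat's condition you invoke, but for a scalar objective on the compact interval $[0,T]$ with affine constraints these two formulations are equivalent and yield the same candidate set $\{0,\hat{t}_{h,s},T\}$; your added discussion of the floor function in $\hat{\theta}$ is more careful than anything the paper writes.
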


\subsection{Routing Algorithm Design}\label{Ch4_Sec4_SubSec3}
We summarize our routing algorithms to solve both the global data delivery optimization problem~\ref{Ch4_prb1} and the distributed data delivery optimization problem~\ref{Ch4_prb2}, based on Theorem~\ref{Ch4_Thm3} and Theorem~\ref{Ch4_Thm4}, in Algorithm~\ref{Ch4_Alg1} and Algorithm~\ref{Ch4_Alg2}, respectively. The algorithms can be e.g.\ executed at RSUs where vehicles have access to the routing information when entering in the corresponding coverage. In Algorithm~\ref{Ch4_Alg1}, the set of routes $\bm{\Gamma}$ are planned and created considering all possible routes between $S$ and $D$. Routes are iteratively selected from $\bm{\Gamma}$ for calculating the weighted sum of latency and data rate until all routes have been traversed, as indicated in line~\ref{Ch4_Alg1:1}. For each route, the weighted sum $\alpha \bar{C}_i - (1-\alpha) \bar{L}_i$ and the corresponding optimal candidate discovery duration $t_i$ are obtained in line~\ref{Ch4_Alg1:2} and \ref{Ch4_Alg1:3}, respectively. The update of the overall maximal weighted sum $opt$, the optimal route $\gamma^*$, and the overall global optimal duration $t^*$ are described in line~\ref{Ch4_Alg1:4}--\ref{Ch4_Alg1:5}. Similar procedures can be found in Algorithm~\ref{Ch4_Alg2}.

\begin{algorithm}[tbp]
	\renewcommand\baselinestretch{0.7}\selectfont
	\SetAlgoLined
	\KwIn{The set of routes $\bm{\Gamma} = \{\gamma_i|i=1,\dotsc,n\}$}
	\KwOut{The maximal weighted sum $opt$, the optimal route $\gamma^*$, and the global optimal candidate discovery duration $t^*$}
	\begin{itemize}
		{\item $n$: Number of routes between $S$ and $D$}
		{\item $\lambda_{h, \, {h+1}}^{(i)}$: Arrival rate at $h$-th hop of route $\gamma_i$}
		{\item $t_i$: Global candidate discovery duration of route $\gamma_i$}
		{\item $\bar{L}_i$: Expected E2E latency of route $\gamma_i$}
		{\item $\bar{C}_i$: Expected E2E data rate of route $\gamma_i$}
		{\item $i$: Iterator}
	\end{itemize}
	\textit{Initialization}:
	$opt=0$, $\gamma^*=\gamma_1$, $t^*=0$, $i=1$ \\
	\Begin{
		\everypar={\nl}
		\While{$i \neq n$}{\label{Ch4_Alg1:1}
			Find $\lambda_{h, \, {h+1}}^{(i)}$, $\forall h \in \{1,\dotsc,k_i\}$ in $\gamma_i$\;
			Obtain $\bar{L}_i$ and $\bar{C}_i$ by computing~\eqref{eqnDrouteNew},~\eqref{eqnUallfindtheo},~\eqref{eqnUallfindmtheo},~\eqref{eqnUnonefindtheo},~\eqref{eqnUnonefindmtheo},~\eqref{eqnUsomefindderi2},~\eqref{eqnUsomefindderi3}, and~\eqref{eqnUsomefindderi4}\; \label{Ch4_Alg1:2}
			Obtain $\alpha \bar{C}_i - (1-\alpha) \bar{L}_i$ and $t_i$ by solving $\frac{\mathrm{d}(\alpha \bar{C}_i - (1-\alpha) \bar{L}_i)}{\mathrm{d}t_i}=0$\; \label{Ch4_Alg1:3}
			\If{$\alpha \bar{C}_i - (1-\alpha) \bar{L}_i \geq opt$}{\label{Ch4_Alg1:4}
				$opt = \alpha \bar{C}_i - (1-\alpha) \bar{L}_i$\;
				$\gamma^* = \gamma_i$\;
				$t^* = t_i$\;
			}\label{Ch4_Alg1:5}
			$i=i+1$\;
		}
		\textbf{Return} $opt$, $\gamma^*$ and $t^*$\;
	}
	\caption{Global Optimization Algorithm}\label{Ch4_Alg1}
\end{algorithm}
\begin{algorithm}[tbp]
	\renewcommand\baselinestretch{0.7}\selectfont
	\SetAlgoLined
	\KwIn{The set of routes $\bm{\Gamma} = \{\gamma_i|i=1,\dotsc,n\}$}
	\KwOut{The maximal weighted sum $opt$, the optimal route $\gamma^*$, and the set of hop-wise optimal candidate discovery durations $\hat{\bm{t}}^*$}
	\begin{itemize}
		{\item $n$: Number of routes between $S$ and $D$}
		{\item $\lambda_{h, \, {h+1}}^{(i)}$: Arrival rate at $h$-th hop of route $\gamma_i$}
		{\item $\hat{t}_{h}$: Hop-wise candidate discovery duration in hop $h$}
		{\item $\hat{L}^{(i)}_{h}$: Expected hop-wise latency in hop $h$ of route $\gamma_i$}
		{\item $\hat{C}^{(i)}_{h}$: Expected hop-wise data rate in hop $h$ of route $\gamma_i$}
		{\item $\bar{L}_i$: Expected E2E latency of route $\gamma_i$}
		{\item $\bar{C}_i$: Expected E2E data rate of route $\gamma_i$}
		{\item $\hat{\textit{\textbf{t}}}_i$: Set of hop-wise candidate discovery durations of route $\gamma_i$}
		{\item $i$: Iterator}
	\end{itemize}
	\textit{Initialization}:
	$opt=0$, $\gamma^*=\gamma_1$, $\hat{\bm{t}}^*=\{\hat{t}_{h}=0|h=1,\dotsc,k_i\}$, $i=1$ \\
	\Begin{
		\everypar={\nl}
		\While{$i \neq n$}{\label{Alg2:1}
			Find $\lambda_{h, \, {h+1}}^{(i)}$, $\forall h \in \{1,\dotsc,k_i\}$ in $\gamma_i$\;
			Obtain $\hat{L}^{(i)}_{h}$ and $\hat{C}^{(i)}_{h}$ by computing~\eqref{eqnDhopDstrSol} and~\eqref{eqnUhopDstrSol}, $\forall h \in \{1,\dotsc,k_i\}$\; \label{Ch4_Alg2:3}
			Obtain $\alpha \hat{C}^{(i)}_{h} - (1-\alpha) \hat{L}^{(i)}_{h}$ and $\hat{t}_{h}$ by solving $\frac{\mathrm{d}\big(\hat{C}^{(i)}_{h} - (1-\alpha) \hat{L}^{(i)}_{h}\big)}{\mathrm{d}\hat{t}_{h}}=0$, $\forall h \in \{1,\dotsc,k_i\}$\; \label{Ch4_Alg2:4}
			Obtain $\bar{L}_i$, $\bar{C}_i$, and $\hat{\textit{\textbf{t}}}_i$ by $\bar{L}_i = \sum_{h=1}^{k_i} \hat{L}^{(i)}_{h}$, $\bar{C}_i = \min_{\forall h \in \{1,\dotsc,k_i\}} \hat{C}^{(i)}_{h}$, and $\hat{\textit{\textbf{t}}}_i = \{\hat{t}_{h}|h=1,\dotsc,k_i\}$\; \label{Ch4_Alg2:5}
			\If{$\alpha \bar{C}_i - (1-\alpha) \bar{L}_i \geq opt$}{\label{Ch4_Alg2:6}
				$opt = \alpha \bar{C}_i - (1-\alpha) \bar{L}_i$\;
				$\gamma^* = \gamma_i$\;
				$\hat{\textit{\textbf{t}}}^* = \hat{\textit{\textbf{t}}}_i$\;
			}\label{Ch4_Alg2:7}
			$i=i+1$\;
		}
		\textbf{Return} $opt$, $\gamma^*$, and $\hat{\bm{t}}^*$\;
	}
	\caption{Distributed Optimization Algorithm}\label{Ch4_Alg2}
\end{algorithm}

\section{Numerical Evaluation and Discussion}\label{Ch4_Sec5}
In this section, we evaluate the performance of the proposed routing algorithms for data delivery in V2X networks. 

\subsection{Simulation Setup}\label{Ch4_Sec5_SubSec1}
The simulation scenario in this section is configured by adopting the system model described in Section~\ref{Ch4_Sec2}. We consider an urban scenario deployment similar to the one proposed in~\cite{3GPPV2X}, where square blocks are surrounded by streets that are 250 meters long and 20 meters wide. 9 RSUs marked as red triangles are located at the crossroads and 100 user equipments (UEs) marked as blue crosses are uniformly dropped in the street in the beginning of the simulation, as illustrated in Fig.~\ref{Ch4_Fig_Simltn}. $S$ and $D$ are the upper-left RSU and the lower-right RSU depicted in the figure, respectively, and there are in total 12 loop-free routes between $S$ and $D$. Channel model (LOS probability, pathloss, blockage model, etc.) is consistent with~\cite{3GPPV2X}. The default simulation parameters are summarized in Table~\ref{Ch4_Table_Simltn}. Simulation samples are averaged over 1000 independent snapshots.
\begin{figure}[tbp]
	\centering
	\includegraphics [scale=0.5]{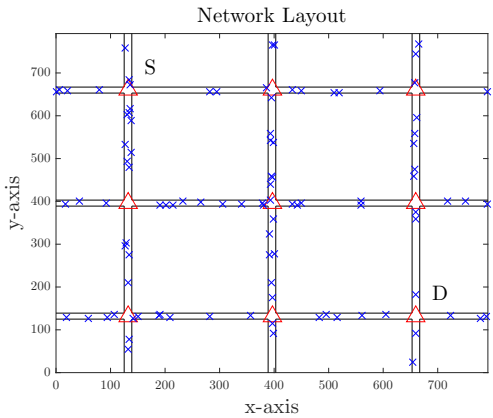}
	\caption{Illustration of simulation scenario.}\label{Ch4_Fig_Simltn}
\end{figure}
\renewcommand{\arraystretch}{0.6}
\begin{table}[tbp]
	\centering%
	\caption{Simulation Parameters}%
	\label{Ch4_Table_Simltn}%
	\begin{tabular}{|c|c|}%
		\hline%
		Parameter                                                                       & Value                                                                 \\ \hline%
		Carrier frequency                                                               & $\SI{28}{\giga\hertz}$                                                             \\ \hline%
		System bandwidth                                                                & $\SI{1}{\giga\hertz}$                                                               \\ \hline%
		Vehicle speed                                                                 & $\SI{45}{\kilo\meter/\hour}$                                          \\ \hline%
		\begin{tabular}[c]{@{}c@{}}Duration of vehicle\\ staying in each hop\end{tabular}   & $\SI{20}{\second}$\\ \hline%
		\begin{tabular}[c]{@{}c@{}}Antenna array\\ (Vertical x Horizontal)\end{tabular} & \begin{tabular}[c]{@{}c@{}}$8 \times 16$ for RSU\\ $4 \times 4$ for UE\end{tabular}      \\ \hline%
		Maximum transmission power &  \begin{tabular}[c]{@{}c@{}}$\SI{30}{\decibel m}$ for RSU\\ $\SI{23}{\decibel m}$ for UE\end{tabular}     \\ \hline%
		Decode error probability & $10^{-3}$ \\ \hline%
		Vehicle arrival rate & $[0.05,0.3]$ \\ \hline
	\end{tabular}%
\end{table}

\subsection{Performance of Global Routing Algorithm}\label{Ch4_Sec5_SubSec2}
In Fig.~\ref{Ch4_Fig_GR:a} and Fig.~\ref{Ch4_Fig_GR:b}, we plot the normalized weighted sum of E2E latency and E2E data rate, defined in~\eqref{eqnOpt}, versus the choice of global candidate discovery duration $t$ with $\alpha = 0.5$ and $\alpha = \{0,\,0.5,\,1\}$, respectively. Here, $\alpha = 0.5$ indicates the weighted sum considering both latency and data rate, and $\alpha = 0$ and $\alpha = 1$ address the E2E latency and the E2E data rate, respectively. Remember that the motivation of the normalization lies in the fact that the data ranges of latency and data rate are not directly comparable.
\begin{figure}[tbp]
	\centering
	\begin{subfigure}{0.48\columnwidth}
		\includegraphics[width=\columnwidth]{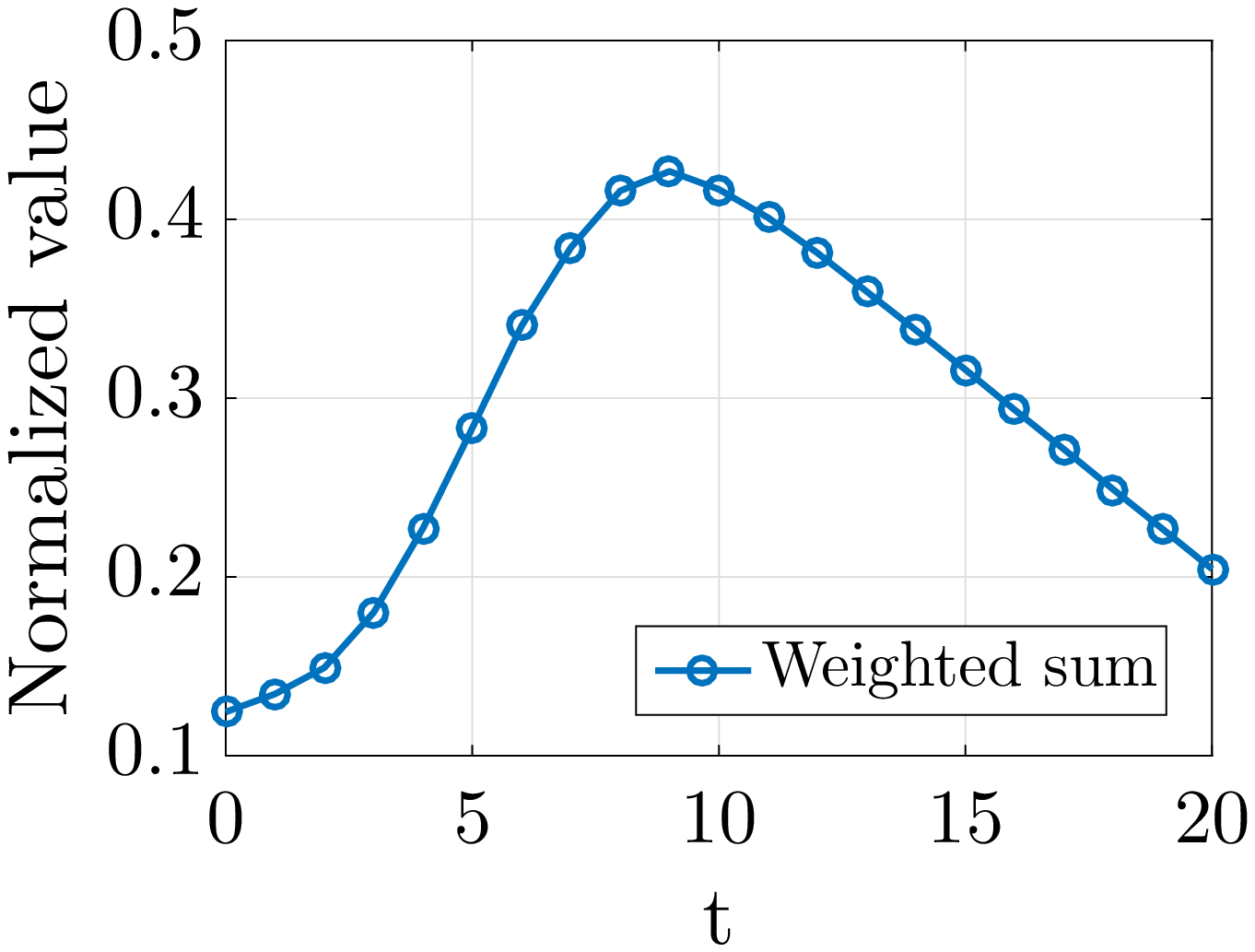}
		\caption{$\alpha = 0.5$}
		\label{Ch4_Fig_GR:a}
	\end{subfigure}
	\begin{subfigure}{0.48\columnwidth}
		\includegraphics[width=\columnwidth]{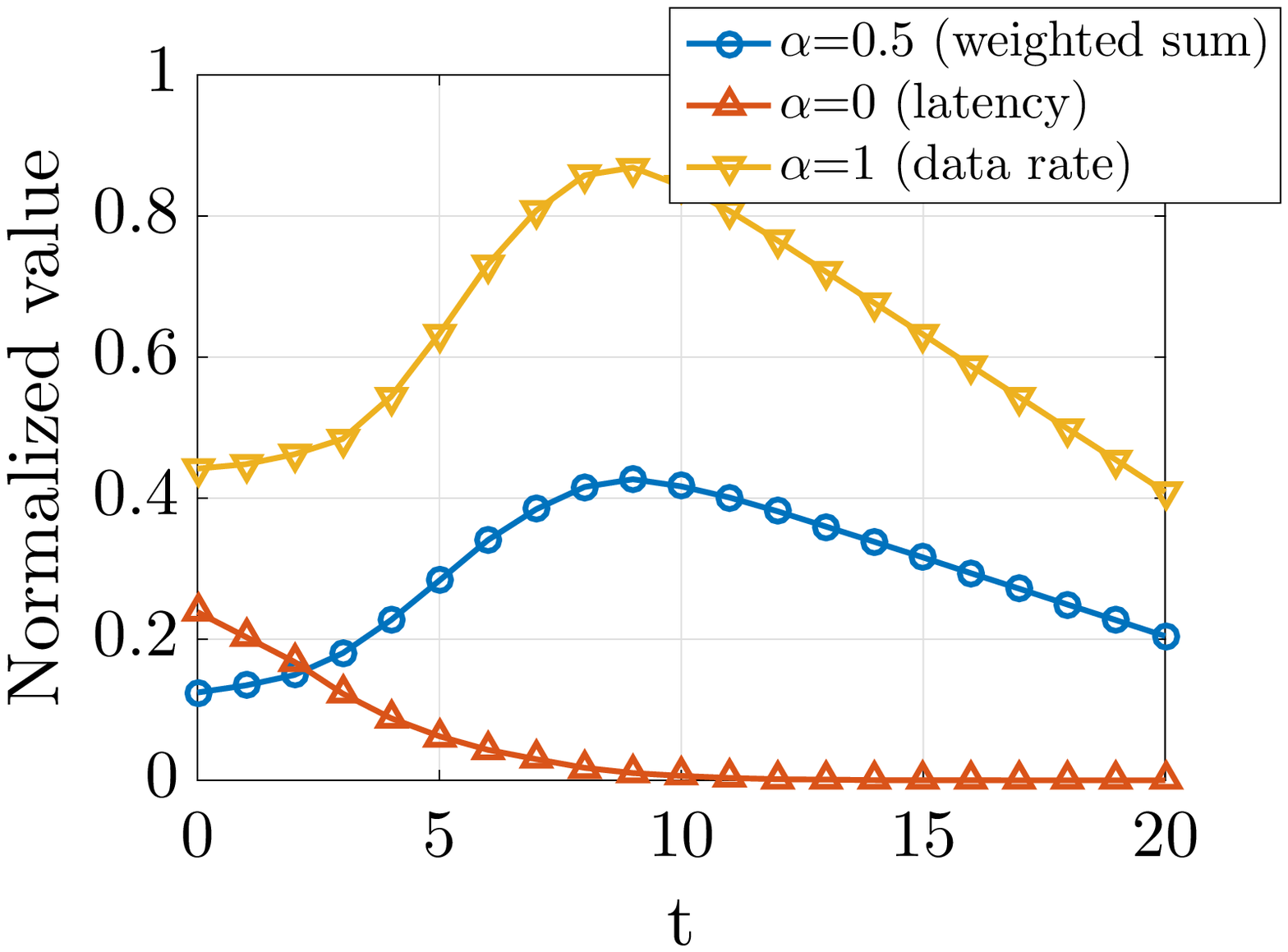}
		\caption{$\alpha = \{0,\,0.5,\,1\}$}
		\label{Ch4_Fig_GR:b}
	\end{subfigure}
	\caption{Performance of the proposed global routing algorithm is evaluated in (a) normalized weighted sum with weight $\alpha = 0.5$ and (b) normalized weighted sum with weight $\alpha = \{0,\,0.5,\,1\}$.}\label{Ch4_Fig_GR}
\end{figure}
Result in Fig.~\ref{Ch4_Fig_GR:a} shows that the solution of the global data delivery problem in Theorem~\ref{Ch4_Thm3} is accurate, where the optimal duration $t^*$ exists and maximizes the weighted sum. With an increased $t$, higher probability of successful candidate discovery can be expected where all hop latencies equal to $T$, and thus the E2E latency is reduced and optimized when $t\rightarrow T$, as shown in Fig.~\ref{Ch4_Fig_GR:b}. However, if both latency and data rate are to be considered, the optimal $t^*$ is observed in the middle range of $[0,T]$ instead of $t^* \rightarrow T$ for minimizing latency, as small value of $t$ results in failed discovery in all hops and limits $r_\text{O}t$ according to~\eqref{eqnUnfind}, while a larger $t$ degrades $r_\text{O}(T-t)$ as indicated in~\eqref{eqnUfind}.

In summary, these results indicate that the proposed global routing algorithm is efficient for solving the global optimization problem in terms of high data rate and low latency.

\subsection{Comparison of Global Routing Algorithm and Other Vehicular Routing Algorithms}\label{Ch4_Sec5_SubSec3}
In this subsection, we compare the normalized weighted sum of the proposed global routing algorithm with other classical vehicular routing algorithms. In making the comparison, we consider two well-known routing algorithms~\cite{JLiu}: the shortest path routing (SPR) that minimizes the E2E latency, and the greedy perimeter stateless routing (GPSR) which is a classical geographical-based routing algorithm with high efficiency in data delivery. 

To get some insights into the comparison, we plot the normalized weighted sum for different routing algorithms, versus the choice of candidate discovery duration $t$ in Fig.~\ref{Ch4_Fig_Algo:a} and Fig.~\ref{Ch4_Fig_Algo:b}, with $\alpha = \{0.5,\,1\}$ and $\alpha = 0$ (latency), respectively. 
\begin{figure}[tbp]
	\centering
	\begin{subfigure}{0.48\columnwidth}
		\includegraphics[width=\columnwidth]{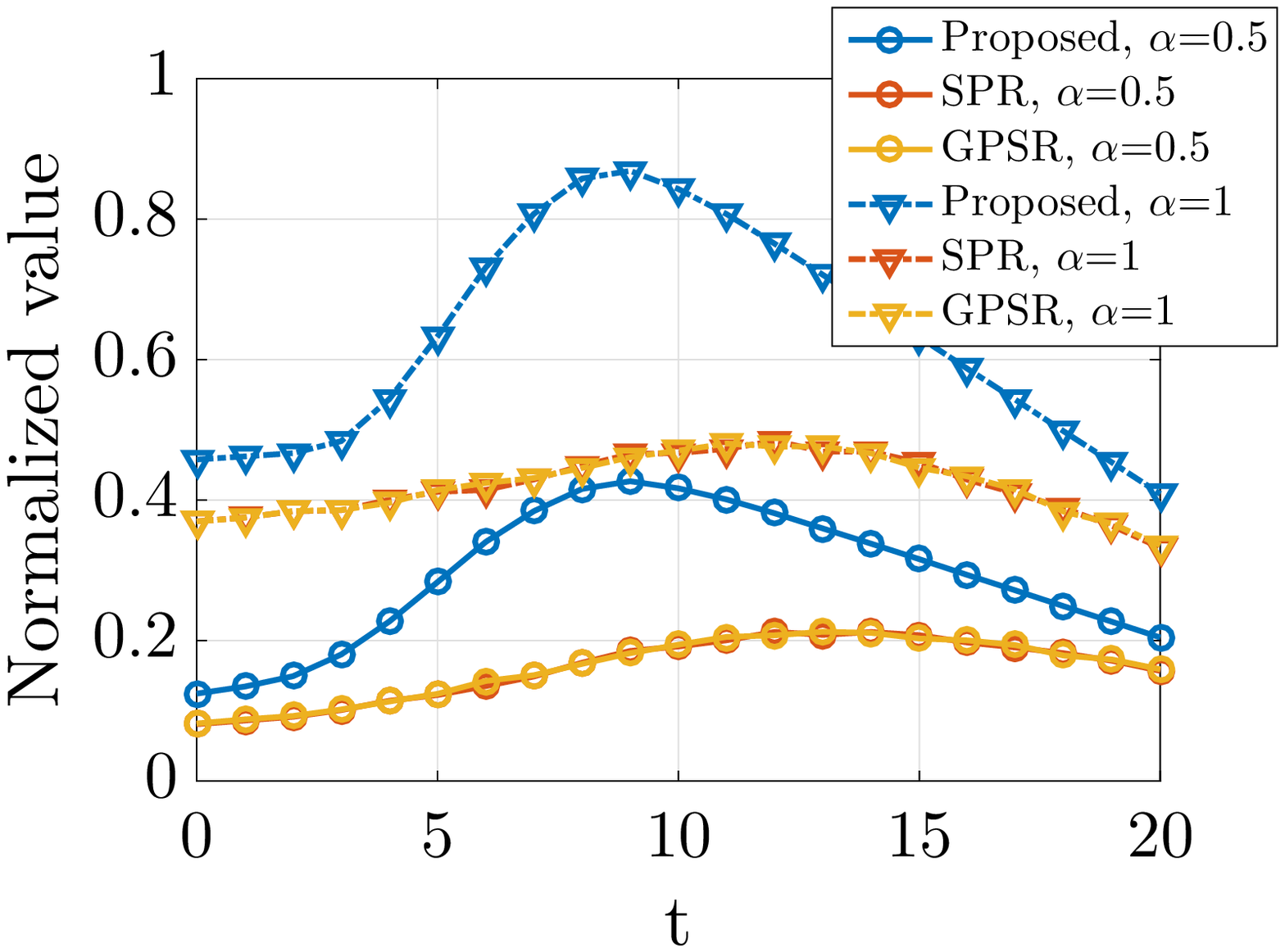}
		\caption{$\alpha = \{0.5,\,1\}$}
		\label{Ch4_Fig_Algo:a}
	\end{subfigure}
	\begin{subfigure}{0.48\columnwidth}
		\includegraphics[width=\columnwidth]{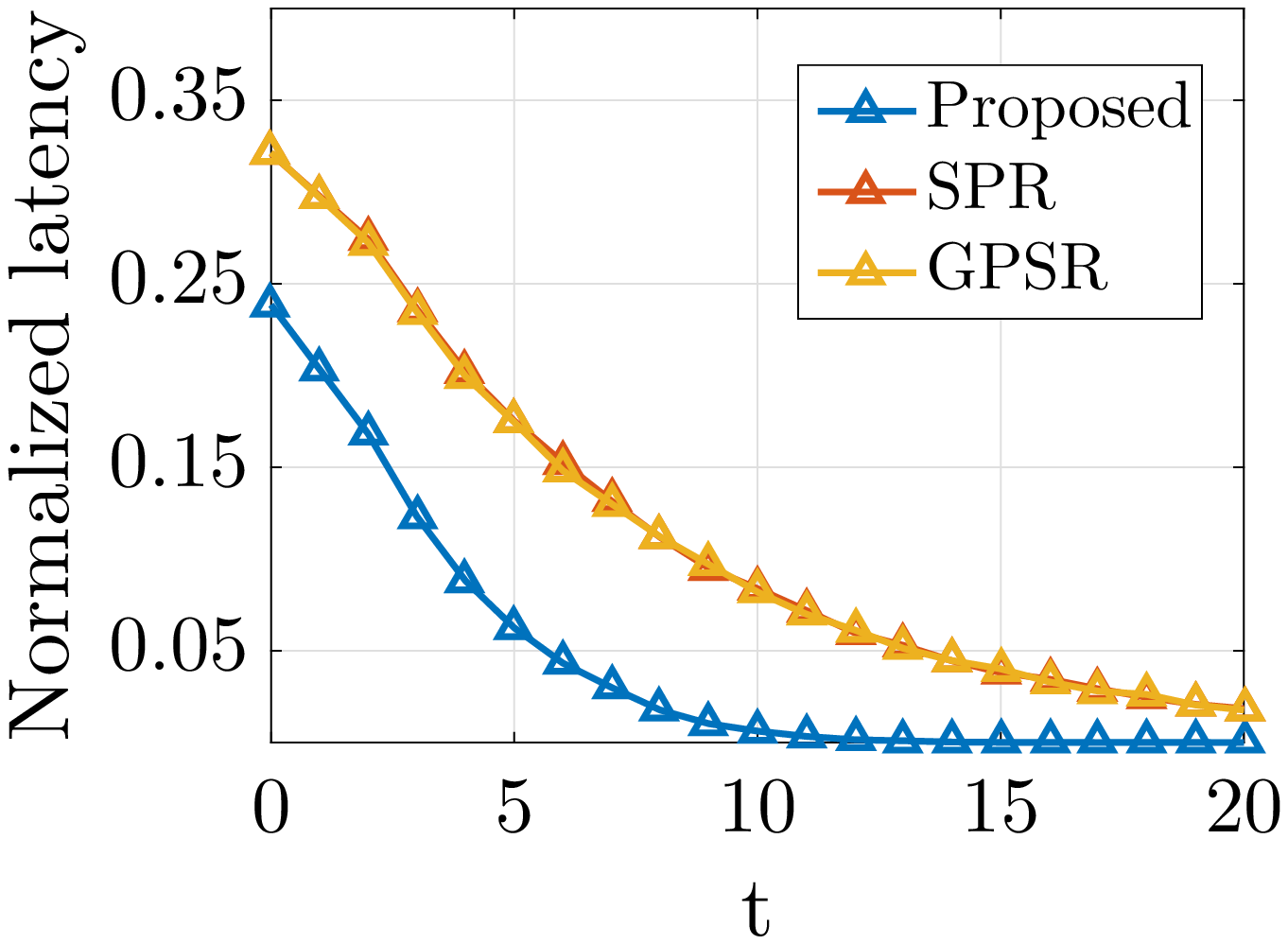}
		\caption{$\alpha = 0$}
		\label{Ch4_Fig_Algo:b}
	\end{subfigure}
	\caption{Performance of the proposed global routing algorithm and other routing algorithms are compared in (a) normalized weighted sum with weight $\alpha = \{0.5,\,1\}$ and (b) normalized E2E latency.}\label{Ch4_Fig_Algo}
\end{figure}
On the one hand, as depicted in Fig.~\ref{Ch4_Fig_Algo:a}, the proposed global algorithm outperforms the others in terms of the weighted sum with weight $\alpha = 0.5$ and $\alpha = 1$ (data rate), as the proposed algorithm selects the route that maximizes the weighted sum, which shows better performance than selecting the shortest route by SPR and GPSR. On the other hand, the proposed algorithm also achieves lower E2E latency compared to SPR and GPSR that are supposed to be able to minimize the latency, as shown in Fig.~\ref{Ch4_Fig_Algo:b}. The reason behind lies in the fact that the geographically shortest route selected by SPR and GPSR is not necessarily the route that minimizes the latency.

\subsection{Impact of Broadcast Schemes and Vehicle Arrival Rates on Data Delivery Performance}\label{Ch4_Sec5_SubSec4}
Fig.~\ref{Ch4_Fig_Brd:a} and Fig.~\ref{Ch4_Fig_Brd:b} plot the maximal normalized weighted sum calculated from~\eqref{eqnOpt} for different broadcast schemes versus the number of simultaneous beams $M$ for beamforming, with $\alpha = 0.5$ and $\alpha = 1$ (data rate), respectively. Here, TD, FD, CD, and SD represent the broadcast schemes addressed in~\cite{YLiTWC} and multiplexed in time, time-frequency, time-code, and time-space, respectively. TD refers to the single beam exhaustive scan where $M=1$, hence we plot a single circle instead of a curve to represent the maximal weighted sum in Fig.~\ref{Ch4_Fig_Brd:a} and Fig.~\ref{Ch4_Fig_Brd:b}. For other schemes, multiple beams are formed simultaneously. Note that the E2E latency with $\alpha = 0$ is not compared as the minimal normalized latency is always 0.
\begin{figure}[tbp]
	\centering
	\begin{subfigure}{0.48\columnwidth}
		\includegraphics[width=\columnwidth]{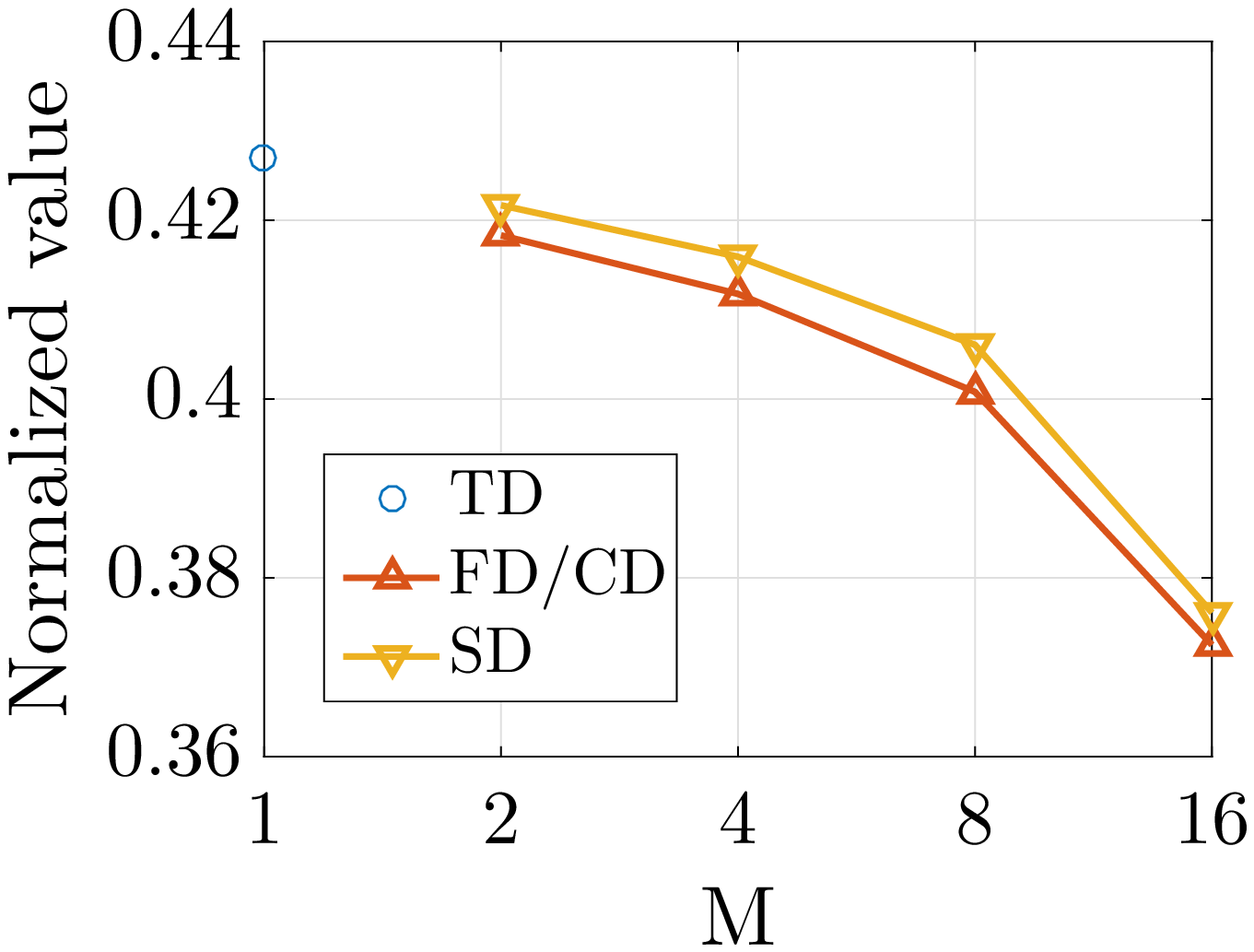}
		\caption{$\alpha = 0.5$}
		\label{Ch4_Fig_Brd:a}
	\end{subfigure}
	\begin{subfigure}{0.48\columnwidth}
		\includegraphics[width=\columnwidth]{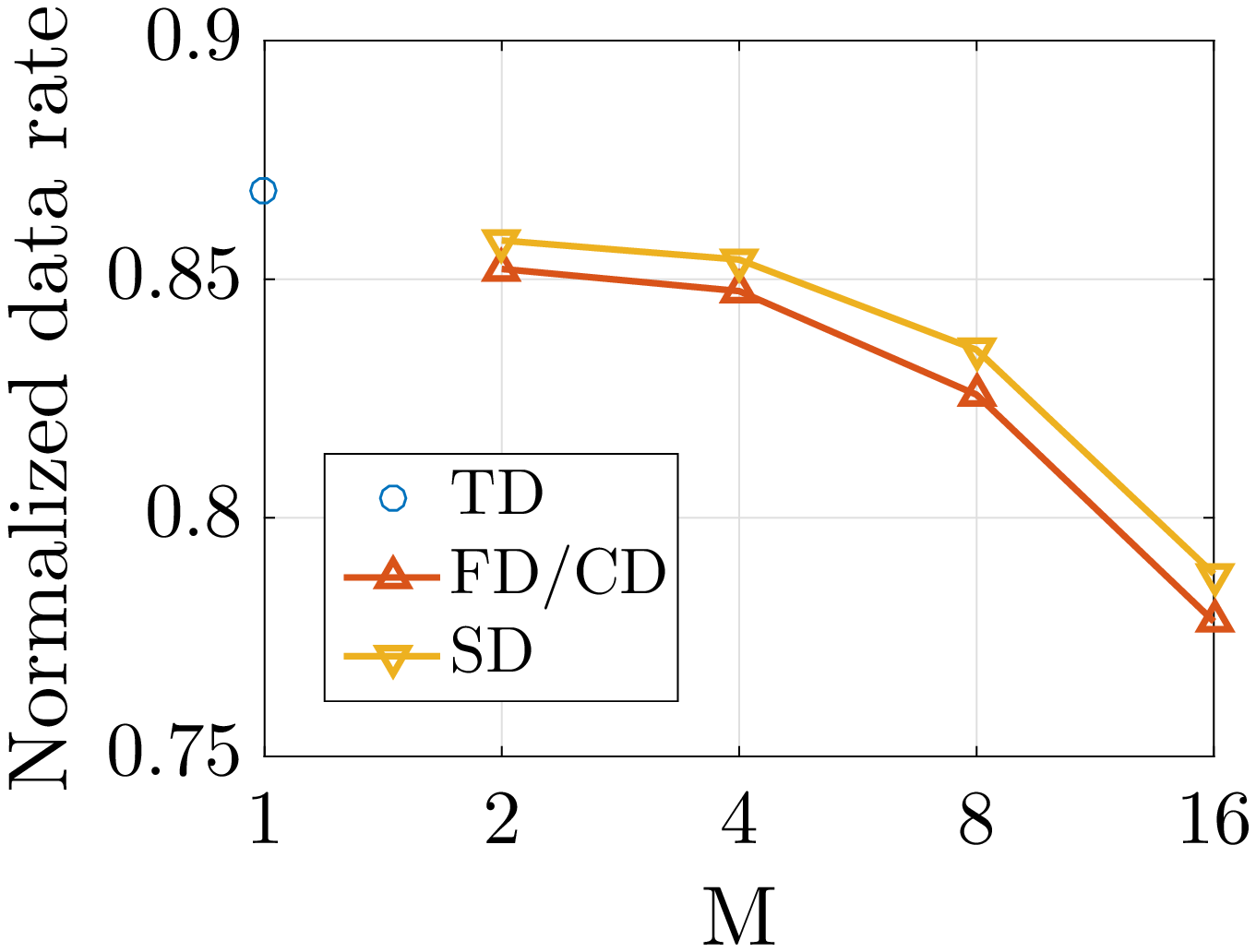}
		\caption{$\alpha = 1$}
		\label{Ch4_Fig_Brd:b}
	\end{subfigure}
	\caption{Performance of different broadcast schemes for candidate discovery are compared in (a) maximal normalized weighted sum with weight $\alpha = 0.5$ and (b) maximal normalized E2E data rate.}\label{Ch4_Fig_Brd}
\end{figure}

According to the conclusion in~\cite{YLiTWC}, TD achieves the lowest average discovery latency which means that if the courier applies single beam exhaustive scan for candidate discovery, the time of one discovery trial $\Delta t$ is minimized compared with other broadcast schemes. Moreover, FD and SD perform exactly same and worse than TD in terms of discovery latency, and the curve of SD locates in-between of TD and FD/CD.

In this way, TD decreases the number of maximal discovery trials $m$ and the actual candidate discovery time $\tau_{h,h+1}^{(d, \text{Veh})}$, and therefore leads to higher probability of successful candidate discovery, lower E2E latency, and higher hop-wise data rate of successful candidate discovery that are addressed in~\eqref{eqnPfind},~\eqref{eqnDhop}, and~\eqref{eqnUfind}, respectively, which eventually achieves the highest weighted sums for both $\alpha = 0.5$ and $\alpha = 1$, as illustrated in Fig.~\ref{Ch4_Fig_Brd:a} and Fig.~\ref{Ch4_Fig_Brd:b}, respectively. By contrast, the weighted sums degrade when SD and FD/CD, which call for discovery latency penalty, are utilized for candidate discovery with more simultaneous beams. Therefore, the results in Fig.~\ref{Ch4_Fig_Brd:a} and Fig.~\ref{Ch4_Fig_Brd:b} recommend TD as the optimal beamforming and broadcast scheme for candidate discovery to explicitly achieve the best data delivery performance.
\begin{figure}[tbp]
	\centering
	\begin{subfigure}{0.48\columnwidth}
		\includegraphics[width=\columnwidth]{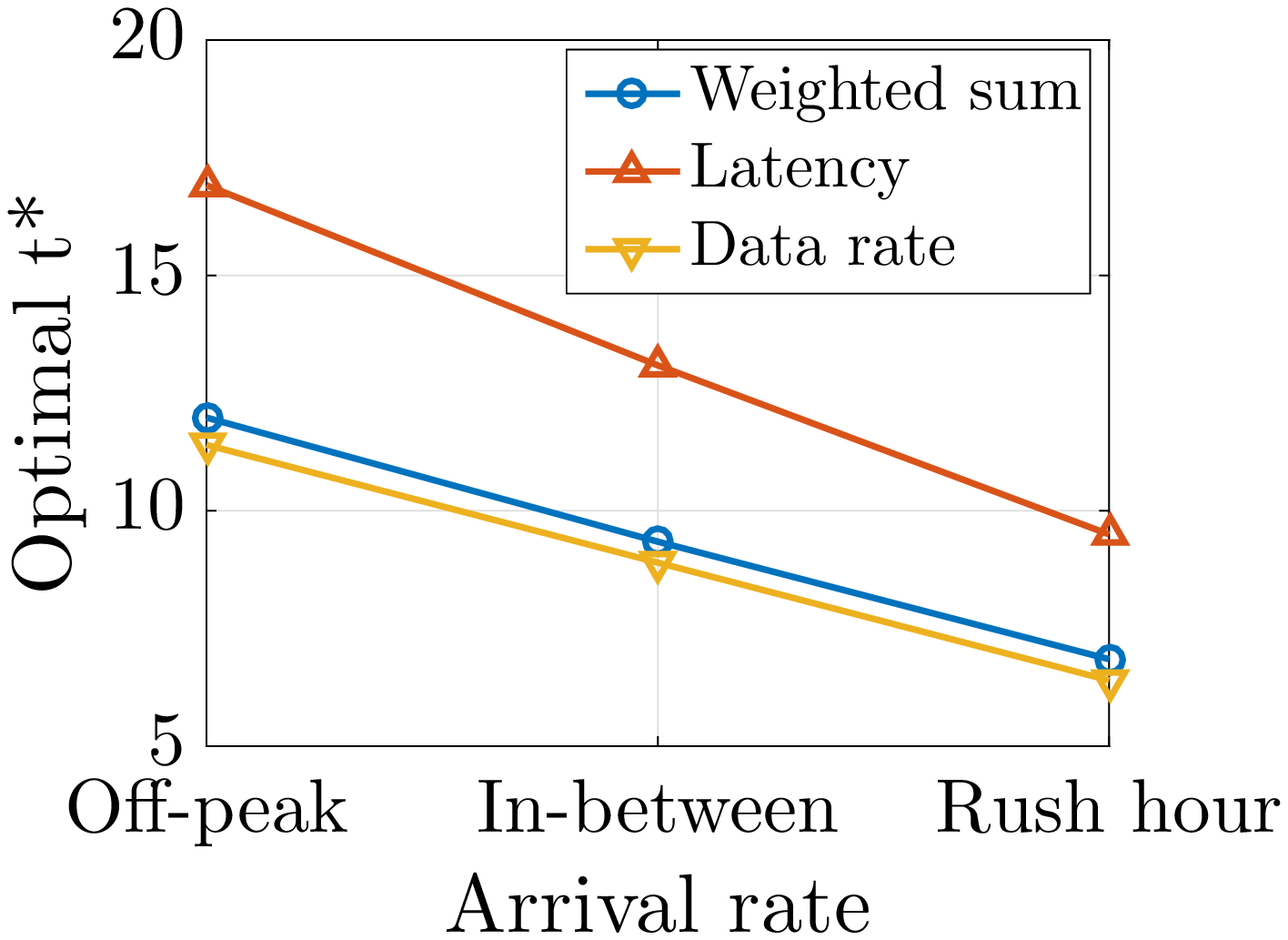}
		\caption{Optimal $t^*$}
		\label{Ch4_Fig_ArrBH:a}
	\end{subfigure}
	~ 
	\begin{subfigure}{0.48\columnwidth}
		\includegraphics[width=\columnwidth]{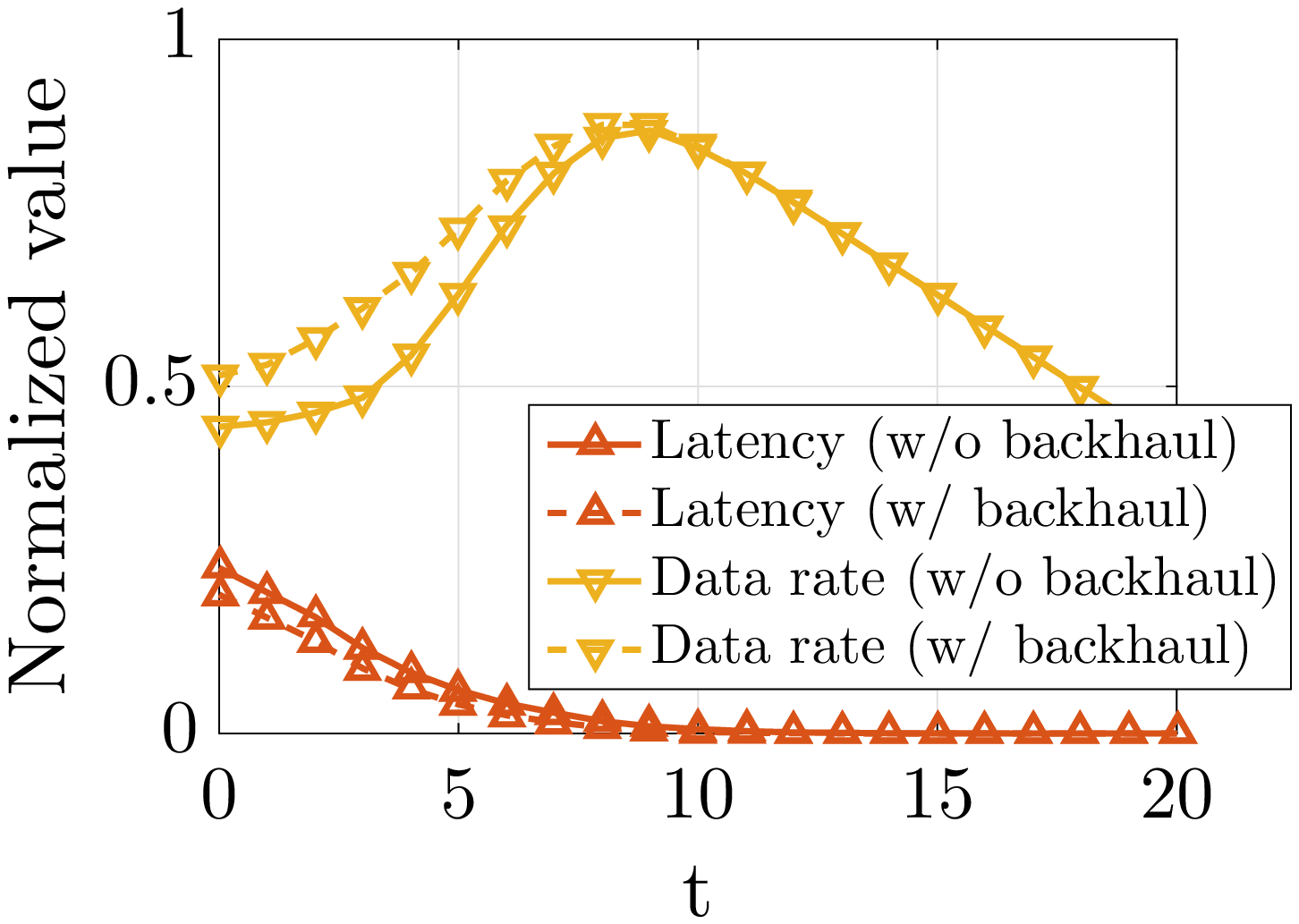}
		\caption{Impact of backhaul}
		\label{Ch4_Fig_ArrBH:b}
	\end{subfigure}
	\caption{Optimal $t^*$ to achieve the maximal weighted sum for different vehicle arrival rates are compared in (a). Performance of data delivery with and without backhaul are compared in (b).}\label{Ch4_Fig_ArrBH}
\end{figure}

Fig.~\ref{Ch4_Fig_ArrBH:a} plots the optimal candidate discovery duration $t^*$ that optimizes the weighted sum with weight $\alpha = 0.5$ for different vehicle arrival rates. Here, the intervals of arrival rates for off-peak, in-between, and rush hour are defined as $[0.05,0.15]$, $[0.1,0.2]$, and $[0.2,0.3]$ vehicles/second, respectively. According to the results, we conclude that the optimal $t^*$ is almost linearly decreased. The reason behind this falls in the fact that a higher arrival rate improves the probability of successful candidate discovery in~\eqref{eqnPfind}, and correspondingly a relative small $t$ is able to achieve the best data delivery performance. 

\subsection{Performance of Wireless Backhauling for Data Delivery}\label{Ch4_Sec5_SubSec5}
Wireless backhauling, as an efficient alternative to expensive fiber connectivity among RSUs, provides coverage extension and capacity expansion with low latency, as addressed in Section~\ref{Ch4_Sec1}. Therefore, instead of discovering a candidate to forward data received from courier, RSU equipped with wireless backhaul is able to transmit the data directly to RSU of the next hop in case they are interconnected via wireless backhaul. In Fig.~\ref{Ch4_Fig_ArrBH:b}, the normalized E2E latency ($\alpha = 0$) and the normalized E2E data rate ($\alpha = 1$) of global data delivery with (w/) backhaul and without (w/o) backhaul are compared, versus the choice of global candidate discovery duration $t$.

On the one hand, the results in Fig.~\ref{Ch4_Fig_ArrBH:b} suggest that incorporating backhaul data forwarding reduces the latency when $t$ is small, i.e., the support of backhaul alleviates the burden of RSU in candidate discovery when courier fails. On the other hand, increased data rates are also observed when $t \rightarrow 0$, because higher data rate is provided by the backhaul transmission compared to V2I communications. Note that for large values of $t$, candidate discovery tends to be successful in almost all hops, therefore the data is carried and forwarded purely by vehicles without store-and-forward from RSU, and data delivery w/ and w/o backhaul perform exactly the same.

\subsection{Comparison of Global and Distributed Routing Algorithm}\label{Ch4_Sec5_SubSec6}
In the end, the performance of global and distributed routing algorithms are compared in Table~\ref{Ch4_Table_CGD}, Fig.~\ref{Ch4_Fig_hist:a}, and Fig.~\ref{Ch4_Fig_hist:b}. It is stated in Table~\ref{Ch4_Table_CGD} that around $4.7\%$ and $2.9\%$ gain of the maximal normalized weighted sum with weight $\alpha = 0.5$ and $\alpha = 1$ can be achieved by the distributed algorithm compared to the global one, respectively. The normalized E2E latency are minimized by both algorithms with equal value of 0 and thus are not recorded in the table.
\begin{figure}[tbp]
	\centering
	\begin{subfigure}{0.48\columnwidth}
		\includegraphics[width=\columnwidth]{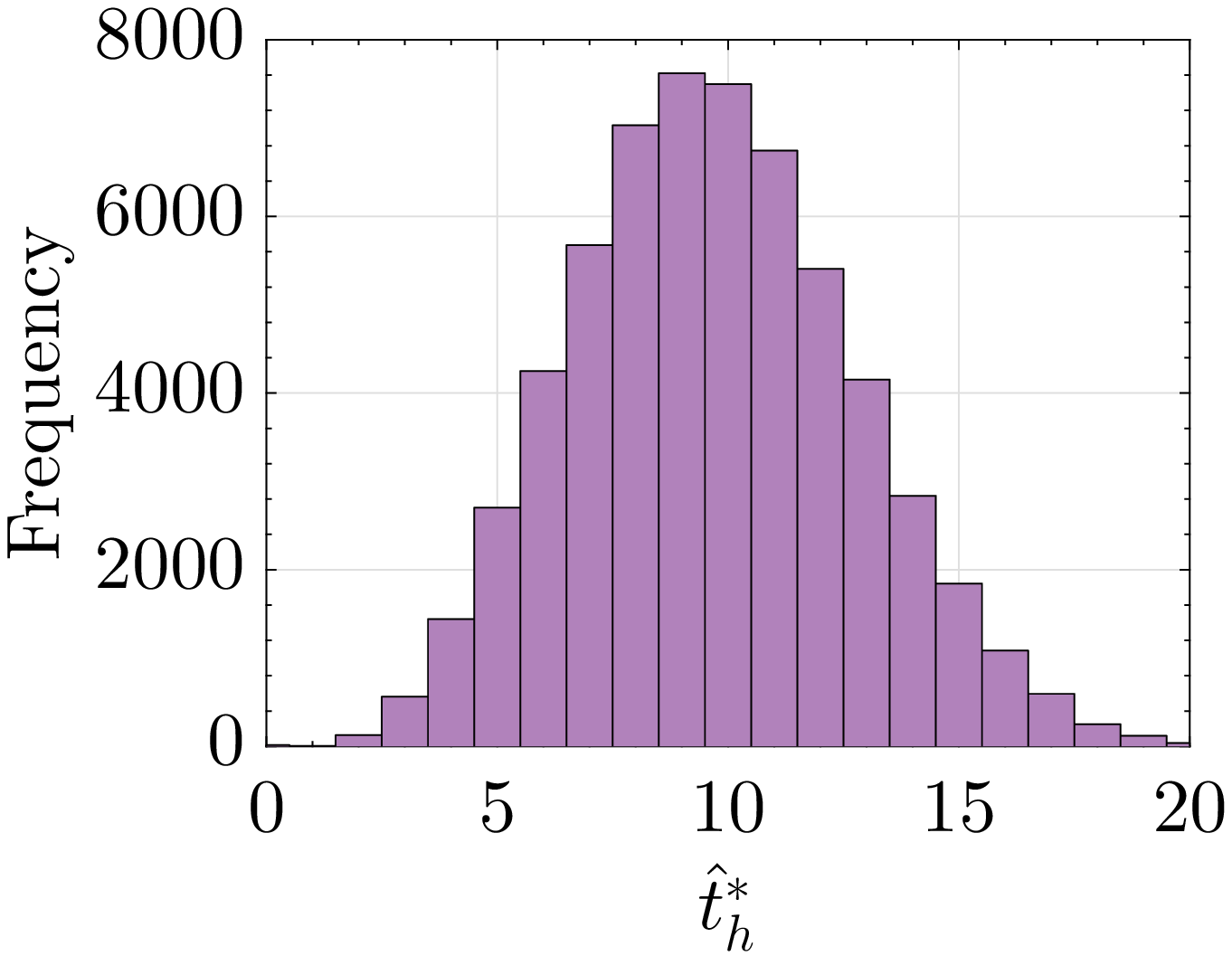}
		\caption{Latency}
		\label{Ch4_Fig_hist:a}
	\end{subfigure}
	\begin{subfigure}{0.48\columnwidth}
		\includegraphics[width=\columnwidth]{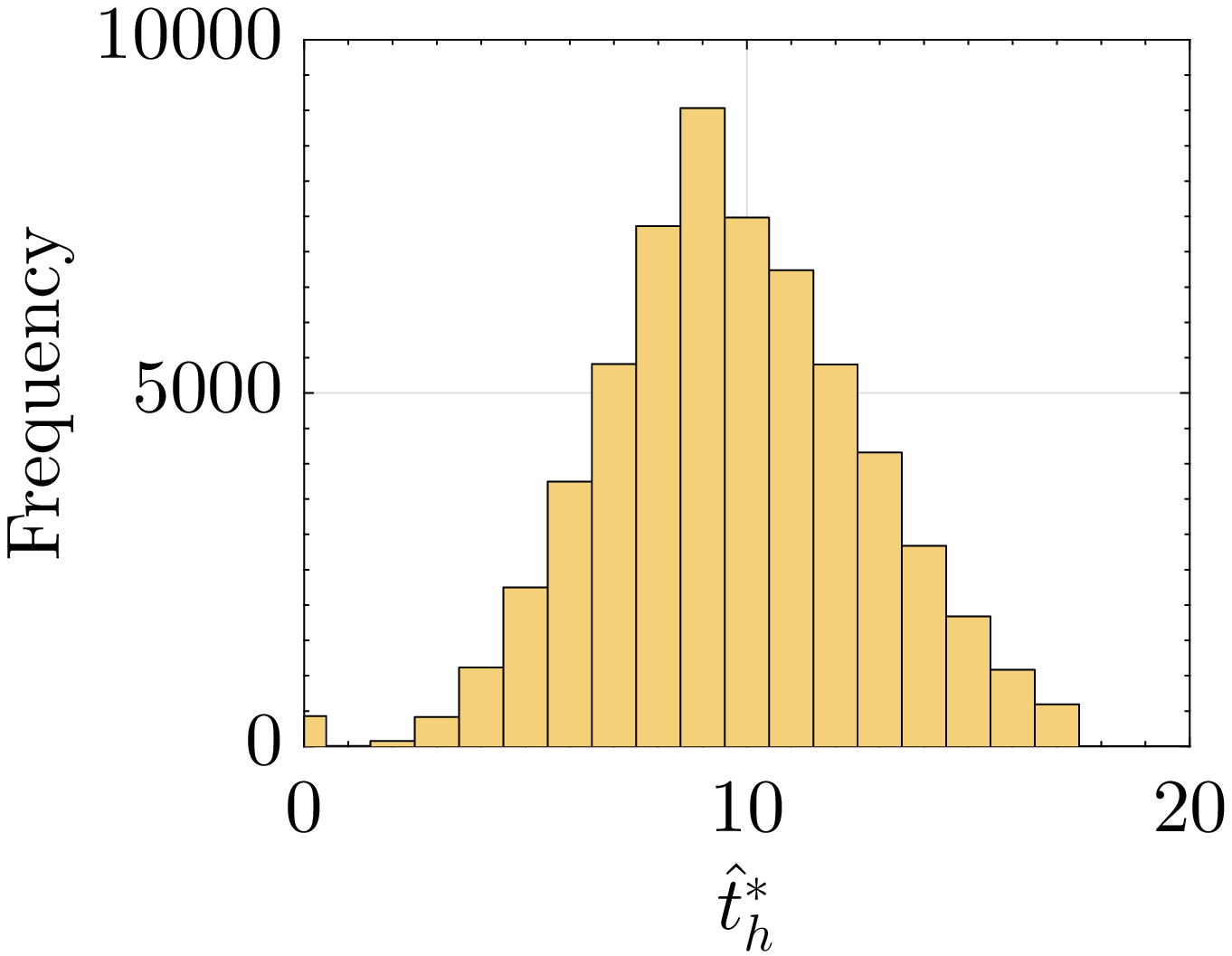}
		\caption{Data rate}
		\label{Ch4_Fig_hist:b}
	\end{subfigure}
	\caption{Illustration of histogram of optimal hop-wise candidate discovery duration $\hat{t}^*_h$ for (a) minimizing hop-wise latency and (b) maximizing hop-wise data rate.}\label{Ch4_Fig_hist}
\end{figure}
\renewcommand{\arraystretch}{0.6}
\begin{table}[tbp]
	\centering
	\caption{Comparison of Global and Distributed Routing Algorithm}
	\label{Ch4_Table_CGD}
	\begin{tabular}{|l|l|l|l|}
		\hline
		& Global & Distributed & Gain  \\ \hline
		$\alpha = 0.5$ & 0.4269 & 0.4469      & 4.7\% \\ \hline
		$\alpha = 1$   & 0.8688 & 0.8938      & 2.9\% \\ \hline
	\end{tabular}
\end{table}

Fig.~\ref{Ch4_Fig_hist:a} and Fig.~\ref{Ch4_Fig_hist:b} plot the histogram of the optimal hop-wise candidate discovery duration $\hat{t}^*_h$ for the distributed routing algorithm obtained by~\eqref{eqnSold1} across different snapshots for minimal E2E latency and maximal E2E latency, respectively. We notice that both histogram figures yield distributions of optimal $\hat{t}^*_h$ similar to the curve in Fig.~\ref{Ch4_Fig_GR:b}. Specifically, the most counts of $\hat{t}^*_h$ for both minimal latency and maximal data rate emerge at $\hat{t}^*_h=9$, which is in consistence with the observation from Fig.~\ref{Ch4_Fig_GR:b} where $t^*=9$ is the optimal value for minimizing latency and maximizing data rate with the global routing algorithm.

\section{Summary}\label{Ch4_Sec6}
In this paper, we investigated the data delivery problem in V2X networks. We proposed an analytical framework to derive the mathematical expressions of expected delivery latency and data rate. Based on this, optimization problems for both global and distributed data delivery have been formulated to maximize the weighted sum of E2E latency and data rate and the weighted sum of hop-wise latency and data rare, respectively. Leveraging the reformation of closed-form expressions of expected latency and data rate, the convexity of the proposed optimization problems are verified and optimal solutions that maximize the weighted sum of both E2E and hop-wise latency and data rate are proposed. Afterwards, both global and distributed multihop routing algorithms are developed for solving the global and distributed optimization problems by obtaining optimal global and hop-wise candidate discovery durations, respectively. Finally, the extensive system-level simulations are conducted to evaluate the performance of the proposed algorithms by considering different vehicular routing algorithms, various beamforming configurations and vehicular arrival rates, and the support of wireless backhauling. It is demonstrated that the weighted sum is maximized by obtaining the optimal candidate discovery duration which varies with desired optimization objectives (latency and/or data rate). This implies that a large candidate discovery duration is recommended for reducing the latency. If both latency and data rate are to be considered, a relatively smaller value of candidate discovery duration provides the flexibility to achieve a trade-off between latency and data rate. 
Furthermore, the proposed routing algorithm provides considerable improvement over classical vehicular routing algorithms in the sense of minimizing latency while maximizing data rate. Finally, the selection of different broadcast schemes and vehicle arrival rates, and the availability of wireless backhauling, are demonstrated to influence latency and data rate, and the distributed multihop routing algorithm is shown to be able to provide lower latency and higher data rate compared to the global algorithm.

Future work can leverage the proposed analytical framework to investigate data delivery performance under other metrics, e.\,g., delivery ratio, or to extend the framework by incorporating more accurate radio model with relaxed assumptions. It would also be interesting to include various system models such as high-way, and/or different vehicular traffic types.

\appendices

\section{Proof of Convexity of Data Delivery Optimization Problem}\label{Ch4_Sec7_SubSec6}

To prove the convexity of the global data delivery optimization problem, we incorporate the following conclusions addressed in~\cite{Boyd}:
\begin{enumerate}[nolistsep]
	\item Exponential function is convex: $e^{ax}, \forall a \in \mathbb{R}$. \label{list:1}
	\item Power function is convex: $x^a$, $\forall a \geq 1$. \label{list:2}
	\item Weighted sum of convex/concave functions is convex/concave: $\sum_{i=1}^{n}\omega_i f_i$. \label{list:3}
	\item Point-wise maximum/minimum of convex/concave functions is convex/concave: $\max_{\forall i} f_i$. \label{list:4}
\end{enumerate}
Based on these, we transform the expected E2E latency derived in~\eqref{eqnDrouteNew} as follows:
\begin{align*}
\bar{L} &= kT + \sum_{h=1}^{k} (1-\alpha_h)\phi_h\big(\beta_h(t)+\theta_h(t)-\beta_h(t)\theta_h(t)\big) \\
&= A + \sum_{h=1}^{k} B\big(\beta_h(t)+\theta_h(t)-\beta_h(t)\theta_h(t)\big). \numberthis
\label{eqnDrouteNewConvex}
\end{align*}
Here, $A=kT$ and $B= (1-\alpha_h)\phi_h$ can be treated as constant without the term $t$. $\beta_h(t)$ and $\theta_h(t)$ are functions of $t$ and convex according to~\ref{list:1}). Then, the expected E2E latency $\bar{L}$, calculated as the weighted sum of convex function $\beta_h(t)$, $\theta_h(t)$, and product $\beta_h(t)\theta_h(t)$, is convex as stated in~\ref{list:3}).

For the expected E2E data rate, similar demonstration of convexity can be deducted. Firstly, note that $\bar{C}$ in~\eqref{eqnUroute} can be reformed as
\begin{align*}
\bar{C}
&= \min_{\forall h} \Big( \zeta_h + (1-\alpha_h)\big(1-z(t)\big)  \big(\iota_h + \nu_h(t) \big)\\
& \qquad \qquad + (1-\alpha_h)z(t) \big( \kappa_h + \chi_h(t) \big) \Big). \numberthis
\label{eqnUrouteConvex1}
\end{align*}
By taking the second-order derivative, it can be shown that both $\big(1-z(t)\big) \big(\iota_h + \nu_h(t) \big)$ and $z(t) \big( \kappa_h + \chi_h(t) \big)$ are concave. Hence, the weighted sum of these two concave functions and the constant $\zeta_h$ is also concave according to~\ref{list:3}), and the min operator $\min_{\forall h} (.)$ preserves the concavity of the weighted sum according to~\ref{list:4}). Detailed derivations are omitted here due to limited space.

Finally, $\alpha \bar{C} - (1-\alpha) \bar{L}$, represented by the weighted sum of concave function $-\bar{L}$ and concave function $\bar{C}$, is a concave function in line with~\ref{list:3}). In this way, the convexity of the optimization problem~\ref{Ch4_prb1} is verified, and the weighted sum of the E2E latency and the E2E data rate is maximized by determining the optimal $t^*$. 

\section{Proof of Theorem~\ref{Ch4_Thm3}}\label{Ch4_Sec7_SubSec7}

The optimization problem~\ref{Ch4_prb1} can be refined as
\begin{equation}
\max_{\substack{t \in \mathbb{R}}} \alpha \bar{C} - (1-\alpha) \bar{L} \quad \text{s.t.} \quad t\geq 0, \; t \leq T.
\label{eqnSolc2}
\end{equation}
Observe that all inequality constraints functions are affine, and there exists a value of $t$ such that $t=\frac{T}{2} \geq 0$ and $t=\frac{T}{2} \leq T$, which implies the Karush-Kuhn-Tucker (KKT) conditions are necessary. Moreover, the objective function is demonstrated as convex in Appendix~\ref{Ch4_Sec7_SubSec6}. Therefore, the KKT conditions are also sufficient, in which we can use standard KKT form to solve the problem. They are concluded as follows:
\begin{itemize}[noitemsep]
	{\item PF
		\begin{equation}
		t\geq 0, \; t \leq T.
		\label{eqnSolc3}
		\end{equation}}
	{\item DF
		\begin{equation}
		-\frac{\mathrm{d}(\alpha \bar{C} - (1-\alpha) \bar{L})}{\mathrm{d}t} - \psi + \omega = 0,\; \psi \geq 0,\; \omega \geq 0.
		\label{eqnSolc4}
		\end{equation}}
	{\item CS
		\begin{equation}
		\psi t = 0,\; \omega t = 0.
		\label{eqnSolc5}
		\end{equation}}
\end{itemize}
As the value of $t$ is restricted in the set $[0,T]$, we can always take any $t$ without violate PF conditions in case $t>0$ and $t<T$. Therefore, both CS conditions are always satisfied and there is no further restriction on the Lagrangian multiplier $\psi$ and $\omega$ besides $\psi \geq 0$ and $\omega \geq 0$ from DF conditions. Actually, CS conditions also indicate that $\psi=0$ and $\omega=0$ for $t \in (0,T]$, where in order to satisfy DF conditions, $\frac{\mathrm{d}(\alpha \bar{C} - (1-\alpha) \bar{L})}{\mathrm{d}t}|_{t=t_s}$ must be zero, which leads to $t_s \in [0,T]$. 
In conclusion, the optimal solution of the problem in~\eqref{eqnOpt} is given by
\begin{equation}
t^* = \argmax_{t=\{0,t_s,T\}} \alpha \bar{C} - (1-\alpha) \bar{L}, t_s \in [0,T].
\label{eqnSolc6}
\end{equation}

\bibliographystyle{IEEEtran}
\bibliography{mm_wave}

\end{document}